\def\x{{\mathbf x}}
\def\y{{\mathbf y}}
\def\m{{\mathbf m}}
\def\Expect{\mathbb{E}}
\def\R{\mathbb{R}}
\def\C{{\mathbb{C}}}
\def\N{\mathcal{N}}
\def\M{{\cal{M}}}
\def\B{{\cal B}}
\def\Id{\mbox{Id}}
\def\Prob{\mathbb{P}}
\newcommand{\defeq}{\mathrel{\mathop:}=} 
\DeclareMathOperator*{\argmax}{argmax}
\DeclareMathOperator*{\argmin}{argmin}
\DeclareMathOperator*{\LUT}{LUT}
\DeclareMathOperator*{\Real}{real}
\DeclareMathOperator*{\TE}{TE}
\DeclareMathOperator*{\TR}{TR}
\DeclareMathOperator*{\Tone}{T1}
\DeclareMathOperator*{\Ttwo}{T2}
\newtheorem{Proposition}{Proposition}
\newtheorem{Theorem}{Theorem}
\newtheorem{Lemma}{Lemma}
\newtheorem{Definition}{Definition}
\newtheorem{Remark}{Remark}
\begin{document}
%
\title{A Compressed Sensing Framework for Magnetic Resonance Fingerprinting}
%
%
\author{Mike Davies, Gilles Puy, Pierre Vandergheynst and Yves Wiaux}
\maketitle

\begin{abstract}
Inspired by the recently proposed Magnetic Resonance Fingerprinting (MRF) technique, we develop a principled compressed sensing framework for quantitative MRI. The three key components are: a random pulse excitation sequence following the MRF technique; a random EPI subsampling strategy and an iterative projection algorithm that imposes consistency with the Bloch equations. We show that theoretically, as long as the excitation sequence possesses an appropriate form of persistent excitation, we are able to accurately recover the proton density, $\Tone$, $\Ttwo$  and off-resonance maps simultaneously from a limited number of samples. These results are further supported through extensive simulations using a brain phantom.
\end{abstract}

\begin{keywords}
Compressed sensing, MRI, Bloch equations, manifolds, Johnston-Linderstrauss embedding\end{keywords}

\section{Introduction}

Inspired by the recently proposed procedure of Magnetic
Resonance Fingerprinting (MRF), which gives a new technique for
quantitative MRI, we investigate this idea from a compressed
sensing perspective. While MRF itself, was inspired by the recent growth of compressed sensing (CS) techniques
in MRI \cite{Ma-MRF2013}, the exact link to CS was not made explicit, and the
paper does not consider a full CS formulation. Indeed the role of sparsity, random excitation and sampling are not clarified. 
The goal of this current paper is to make the links with CS explicit, shed light on the appropriate acquisition and reconstruction procedures and hence to develop a full compressed sensing strategy for quantitative MRI.


In particular, we identify separate roles for the pulse excitation and the subsampling of $k$-space. We
identify the Bloch response manifold as the appropriate low dimensional signal model on which
the CS acquisition is performed, and interpret
the ``model-based'' dictionary of \cite{Ma-MRF2013} as a natural discretization of this response manifold
We also discuss what is necessary in order to have an appropriate CS-type
acquisition scheme.  

Having identified the underlying signal model we next turn to the reconstruction process. In \cite{Ma-MRF2013} this
was performed through pattern matching using a matched filter based on the model-based 
dictionary. However, this does not offer the opportunity for exact
reconstruction, even if the signal is hypothesised to 
be 1-sparse in this dictionary due to the undersampling of $k$-space. This suggests that we should look to a model based CS framework that directly supports such manifold models \cite{Baraniuk-Wakin2009}. Recent algorithmic work in this direction has been presented by 
Iwen and Maggioni \cite{Iwen-Maggioni-2012}, however, their approach is not practical in the present context as the computational cost of their scheme grows exponentially with the dimension of the manifold. Instead, we
leverage recent results from \cite{TB-2011} and develop a recovery algorithm based on the Projected Landweber Algorithm (PLA). This method also has the appealing interpretation of an iterated refinement of the original MRF scheme.

The remainder of the paper is set out as follows. We begin by giving a brief overview of MRI acquisition. Then we discuss the challenges of quantitative imaging in MRI and review the recently proposed MRF scheme \cite{Ma-MRF2013}. We next develop the detailed mathematical model associated with MRF acquisition which leads us to the voxel-wise Bloch response manifold model observed through a sequence of partially sampled $k$-space measurements. In  
\S\ref{sec: CS-MRF}, using the MRF acquisition model, we set out a framework for a compressed sensing solution to the quantitative MRI problem followed by a simple extension that provides a degree of spatial regularization. 

In the simulation section we demonstrate the efficacy of our methods on an anatomical brain phantom \cite{Collins-1998}, available at the BrainWeb repository \cite{Brainweb}. Our results show that our CS method offers substantial gains in reconstruction accuracy over the original MRF matched filter scheme \cite{Ma-MRF2013}. We also demonstrate the efficiency of the proposed algorithm in terms of speed of convergence and the empirical trade-off between undersampling in $k$-space and excitation sequence length.

Finally, we summarize what we have learnt by placing the MRF procedure within a CS framework and highlight a number of open questions and research challenges.

%

\section{Magnetic Resonance Imaging Principles}

MRI, with its ability to image soft tissue, provides a very powerful imaging tool for medicine. The basic principles of MRI lie in the interaction of proton spins with applied magnetic fields. While a full review of these principles is beyond the scope of this paper, following \cite{Wright1997}, we now introduce the basics required in order to understand the motivation for the proposed acquisition scheme and the subsequent mathematical models. For a more detailed treatment of MRI from a signal processing perspective we refer the reader to one of the excellent reviews on the subject, such as \cite{Wright1997,Fessler2010}.

\subsection{Bloch Equations}
The main source of the measured signal in MRI comes from the magnetic moments of the proton spins. In a single volume element (voxel) the net magnetization $\m = (m^x,m^y,m^z)^T$ is the vector sum of all the individual dipole moments within the voxel. If there is no magnetic field then at equilibrium the net magnetization is zero. 

If a static magnetic field, ${\bf B}_0$ (usually considered to lie in the $[0,0,1]^T$ direction), is then applied the spins align with this field and the net magnetization at equilibrium, $\m_{\mbox{eq}}$, is proportional to the proton density $\rho$ within the volume. However, equilibrium is not achieved immediately after the field is applied, but is controlled by the longitudinal relaxation time, $\Tone$, such that the net magnetization at time $t$ is given by: $m^z(t) = m_{\mbox{eq}} (1- \exp(-t/\Tone))$.

If there is magnetization in the plane orthogonal to ${\bf B}_0$ then the magnetization, $\{m^x,m^y\}$, precesses about the $z$ axis at a frequency called the Lamor frequency, $\omega_L = \gamma |{\bf B}_0|$ (approximately $42.6$MHz per Tesla), where the quantity $\gamma$ is called the gyromagnetic ratio. This in turn emits an electromagnetic signal which is the signal that is measured. As the individual dipoles dephase the net transverse magnetization decays exponentially at a rate $\Ttwo$, called the transverse relaxation time.

In MRI the magnetic field is composed of a static magnetic field and a dynamic component which is manipulated through a radio frequency (RF) coil aligned with the $x$ direction. When a transverse magnetic field is applied via an RF pulse the proton dipoles rotate about the applied magnetic field. The overall macroscopic dynamics of the net magnetization can be summarized by a set of linear differential equations called the Bloch equations \cite{Wright1997}:
\begin{equation}
\frac{\partial \m(t)}{\partial t} = \m(t) \times \gamma {\bf B}(t) -\left(\begin{array}{c}
m^x(t)/\Ttwo\\
m^y(t)/\Ttwo\\
(m^z-m_{\mbox{eq}})/\Tone
\end{array}
\right)
\end{equation}
The response at a given readout time ($\TE$) from an initial RF pulse can be determined by integrating these equations over time. When a specific sequence of pulses is applied (assuming pulse length $\ll \Tone, \Ttwo$) then the dynamics of the magnetization from pulse to pulse or readout to readout can be described simply by a three dimensional discrete time linear dynamical system \cite{Jaynes1955}.

\subsection{Spatial Encoding and Image Formation}

In  order to produce an image it is necessary to spatially encode the magnetization in the received signal. This is done through the application of various magnetic gradients.
First, a slice can be selected through the application of a magnetic gradient along the $z$ direction, while appropriately restricting the frequency band of the excitation pulses. The gradient changes the Larmor frequency as a function of $z$, and only those positions that are excited by the pulses generate a magnetization in the transverse plane.

In order to encode the transverse magnetization spatially at the acquisition time (called the echo time (TE)) the magnetic field can be modified further to have gradients $G_x$ and $G_y$ in the $x$ and $y$ directions. For example, if a linear gradient is applied along the $x$ direction so that $B^z = (B_0+G_x x)$, then the spatial variation of the transverse magnetization is encoded in the Larmor frequency and hence in the frequency of the received signal (it is assumed that the duration of the signal read out time is sufficiently short such that the magnetization can be treated as stationary). The received signal therefore corresponds to a line in the spatial Fourier transform, known as $k$-space, of the transverse magnetization. By careful selection of $G_x$ and $G_y$ it is possible to sample different lines of $k$-space until it is adequately sampled. The most popular technique is to take measurements, which we denote $\y$, that sample $k$-space on a Cartesian grid, so that the image can be formed by the application of the inverse 2D Discrete Fourier transform (DFT), $F$. Thus we can generate a discrete image $\x$ (represented here in vector form) by using $\x=F^H \y$, where $H$ denotes the conjugate transpose. For simplicity, unless stated otherwise, we will work with this discrete representation and assume that samples in $k$-space have been taken on the Cartesian grid.

\subsection{Rapid Imaging}

A key challenge in MRI is acquiring the signals in a reasonably short time. Long scan times are costly, unpopular with patients and can introduce additional complications such as motion artefacts. However, the set up described so far for MRI requires the application of repeated excitation pulses and gradients to acquire the multiple lines of $k$-space. Furthermore, after each acquisition sufficient time must be left in order for the magnetization to achieve equilibrium once again.

One way to accelerate the imaging is to acquire more samples from $k$-space per acquisition. By varying the transverse gradients $G_x$ and $G_y$ as a function of time it is possible to generate more sophisticated sampling patterns. For example, in echo-planar imaging (EPI) \cite{McKinnon: multi-shot EPI} multiple lines of $k$-space are acquired at each pulse. Another strategy is to generate spiral trajectories  in $k$-space. However, in both cases as the readout time gets longer artefacts are introduced by variation in the transverse magnetization over the read out time. Furthermore, in the case of spiral and other non-Cartesian trajectories there is the added complication of requiring more complicated image formation algorithms, such as gridding techniques \cite{Bydder2007}, that attempt to approximate the pseudo-inverse of the non-uniform Fourier transform \cite{Fessler2003}. 
 
A second approach to rapid imaging is to take fewer samples.  Since the emergence of compressed sensing in MRI \cite{Lustig2008}, the idea of subsampling $k$-space has become very popular.
Compressed Sensing exploits the fact that the image being acquired can be approximated by a low dimensional model, e.g. sparse in the spatial or wavelet domain. Then, under certain circumstances, the image can be recovered from a subsampled $k$-space using an appropriate iterative reconstruction algorithm.

Parallel imaging techniques can also be used in conjunction with the above strategies to provide further acceleration. However, these are outside the scope of the current work.

\subsection{Quantitative MRI}

Rather than simply forming an image that measures the transverse magnetization response from a single excitation, the aim of quantitative imaging is to provide additional physiological information by estimating the spatial variation of one or more of the  physical parameters that control the Bloch equations, specifically: $\Tone$, $\Ttwo$ and proton density. These can help in the discrimination of different tissue types and provide useful information in numerous application areas, such as diffusion and perfusion imaging. 

The standard approach to parameter estimation is to acquire a large sequence of images in such a way that for each voxel the sequence of values is dependent on either $\Tone$ and/or $\Ttwo$ as well as certain nuisance parameters. For example, the most common techniques acquire a sequence of images at different echo times from an initial excitation pulse. For $\Tone$ estimation, this is typically an inversion recovery pulse (full $180^{\circ}$ rotation of the magnetic field) and for $\Ttwo$ it is a spin-echo pulse ($90^{\circ}$ rotation). The image sequences encode the exponential relaxation and the parameter of interest can be estimated by fitting an exponential curve to each voxel sequence. Another approach \cite{Deoni-DESPOT-2003} uses a set of well tailored steady state sequences, such that each voxel sequence encodes the relevant parameter values. Such techniques require the acquisition of multiple lines for multiple images, and it is very challenging to achieve within a reasonable time and with an acceptable signal-to-noise ratio (SNR) and resolution.
 
Recently there have been a number of papers attempting to address this problem taking a compressed sensing approach \cite{Block2009,Doneva2010,Zhao2013,Tran-Gia2013,Huang2013}. All these techniques accelerate the parameter acquisition using an exponential fitting model combined with only partially sampling $k$-space for each image. Model-based optimization algorithms \cite{Fessler2010} are then used to retrieve the parameter values. However, while such approaches take their inspiration from compressed sensing and exploit sparse signal models, these techniques mainly focus on the development of novel reconstruction algorithms, and do not tackle the fundamental issue of how to design the acquisition in such a way as to meet the compressed sensing sampling criteria. 

In contrast to this previous body of work, here we will set out a principled compressed sensing approach to the simultaneous determination of all parameters of interest. That is, we will develop an acquisition framework that can be shown to satisfy the compressed sensing criteria thereby enabling us to develop a model based parameter estimation algorithm with exact recovery guarantees. The basis of our acquisition scheme is the recently proposed `magnetic resonance fingerprinting' technique \cite{Ma-MRF2013} which we describe next.

\section{Magnetic Resonance Fingerprinting}

In the recent paper \cite{Ma-MRF2013} a new type of MRI acquisition scheme is presented that enables the quantification of multiple tissue properties simultaneously through a single acquisition process. The procedure is composed of 4 key ingredients: 
\begin{enumerate}
\item 
The material magnetization is excited through a sequence of random RF pulses. There is no need to wait for the signal to return to equilibrium between pulses or for the response to reach a steady state condition as in other techniques. 
\item
After each pulse the response is recorded through measurements in $k$-space. Due to the time constraints only a proportion of $k$-space can be sampled between each pulse. In \cite{Ma-MRF2013} this is achieved through Variable Density Spiral (VDS) sampling. 
\item
A sequence of magnetization response images is formed using gridding to approximate the least square solution. These images suffer from significant aliasing due to the high level of undersampling.
\item
Parameter maps (proton density, $\rho$, $\Tone$, $\Ttwo$ and off-resonance,\footnote{The off-resonance frequency is a additional parameter that can be incorporated into the Bloch equations and measures local field inhomogeneity and chemical shift effects \cite{Ganter}.} $\delta f$) are formed through a pattern matching algorithm that matches the alias-distorted magnetization response sequences per voxel to the response predicted from the Bloch equations.
\end{enumerate}
 
Below we develop the relevant mathematical models for the MRF acquisition system that will allow us to develop a full CS strategy for quantitative MRI.

\subsection{Pulse excitation and the Bloch response manifold}
The MRF process is based upon an Inversion Recovery Steady State Free
Precession (IR-SSFP) pulse sequence.\footnote{As the excitation pulses in MRF are random the term steady state is now somewhat of a misnomer and we should possibly call these Inversion Recovery Randomly Excited Free Precession.} The dynamics of the magnetization for each voxel, assuming a single chemical composition,
are described by the response of the Bloch
equations when 'driven' by the excitation parameters.
 
Let $i = 1, \ldots, N$ index the voxels of the imaged slice. The MRF excitation
generates a magnetization response that can be observed (or at least partially observed) at each excitation pulse. The magnetization at a
given voxel at the $l$th echo time is then a function of the excitation parameters of the $l$th excitation pulse, the magnetization at the $(l-1)$th echo time, the overall magnetic field and the unknown parameters associated with the given voxel. The overall dynamics can be described by a parametrically excited linear system and are summarized in appendix \ref{app: bloch dynamics}. 

The magnetization dynamics at voxel $i$ are parameterized by the voxel's parameter set ${\bf
  \theta }_i = \{\Tone_i,\Ttwo_i,\delta f_i \} \in \M$, where $\M \subset \R^3$ denotes the set of feasible values for $\theta_i$, and the voxel's proton density, $\rho_i$.
 The magnetization response dynamics are also characterized by the excitation parameters of the $l$th pulse, namely the
 flip angle, $\alpha_l$, and the repetition time, $\TR_l$. 
   
 Now and subsequently we will denote the magnetization image sequence by the matrix $X$, with $X_{i,l}$ denoting the magnetization for voxel $i$ at the $l$th read out time. Note we are representing the response image at a the $l$th readout by a column vector which we denote as: $X_{:,l}$, using a Matlab style notation for indexing. Similarly, we will denote the magnetization response sequence for a given voxel $i$ as $X_{i,:}$.
 
Given the initial magnetic field, the initial magnetization of any voxel is known up to the
unknown scaling by its proton density $\rho_i$. Thus the magnetization response at any voxel can be written as a parametric nonlinear mapping
 from $\{\rho_i, {\theta}_i\}$ to the sequence, $X_{i,:}$:
\begin{equation}\label{eq: bloch map}
X_{i,:} = \rho_i B({\bf \theta}_i; \alpha, \TR) \in \C^{1\times L}.
\end{equation}
Here $ \rho_i \in \R_+$ is the proton density at voxel $i$, $L$ is 
the excitation sequence length and $B$ is a smooth mapping
induced by the Bloch equation dynamics: $B:{ \cal{M}} \rightarrow
\C^{1\times L}$, where its smoothness can be deduced by the smooth dependence of the dynamics \eqref{eq: magnetization response} and \eqref{eq: TE response} with respect to $\theta_i$.

In order to be able to retrieve the Bloch parameters ${\bf
  \theta}_i$ and proton density from $X_{i,:}$ it is necessary that the excitation
sequence is ``sufficiently rich'' such that the voxel's magnetization response \eqref{eq: bloch map} can be distinguished from a response with different parameters.
Mathematically this means that there is an embedding of $\R_+ \times \M$ into $\C^L$.\footnote{Strictly speaking we can only consider this to be an embedding for $\rho_i >0$ otherwise $\theta_i$ is not observable.} We will call $\B = B(\M; \alpha, \TR) \subset \C^L$ the Bloch response manifold and denote its cone by $\R_+\B$. 

\begin{Remark} Note that this component of the MRF procedure is not compressive, as the mapping \eqref{eq: bloch map} will typically need to map to a higher dimension than $\dim (\R_+\M)$ in order to induce an embedding. The primary role of the excitation sequence is therefore to ensure identifiability and this can typically be achieved through random excitation as is commonly used in system identification. We will see, however, that the excitation sequence will also need to induce a sufficiently persistent excitation in order for it to be observed in a compressive manner.
\end{Remark}

\begin{Remark} The aim of a good excitation sequence should be to minimize
  the time taken to acquire the necessary data rather than minimizing the total
  number of samples. To this end, the total acquisition time for the
  sequences, $\sum_l \TR_l$ is the relevant cost. Here, while more samples may be taken in MRF in comparison with other quantitative techniques the benefit comes in not having to wait for the magnetization to relax to its equilibrium state between samples. 
\end{Remark}

\begin{Remark} \label{rem: proton density}
While it is clear that the proton density, $\rho_i$, will necessarily be real valued and non-negative, it is common practice in MRI to allow this quantity to absorb additional phase terms due to, for example, coil sensitivity or timing errors. Therefore $\rho_i$ is often allowed to take a complex value. In this work we will retain the idealized model, treating it as non-negative real, however, we note that the subsequent theory presented here can typically be easily modified to work with $\rho_i \in \C$ instead of $\rho_i \in \R_+$, albeit with an increase in the dimensionality of the unknown parameter set. We will highlight specific differences along the way.
\end{Remark}

\subsection{MRF imaging and $k$-space sampling}

So far we have considered the signal model for a single voxel. For a complete spatial image, assuming a discretization into $N$ voxels
and treating each voxel as independent we have ${\bf \theta} \in \M^N$ and $\rho \in \R_+^N$. Similary $X \in
\C^{N\times L}$. We can therefore define the full response mapping, $X = f(\rho,\theta)$, $f: \R_+^N \times \M^N \rightarrow (\R_+\B)^N \subset \C^{N\times L}$, as:
\begin{equation}
\label{eq: full response map}
X = f(\rho,\theta) = [\rho_1 B({\bf \theta}_1; \alpha, \TR), \ldots, \rho_N B({\bf \theta}_N; \alpha, \TR)]^T.
\end{equation}

Unfortunately, it is impractical to observe the full spatial magnetization (via $k$-space) at each repetition time within a sufficiently small time for the magnetization to remain approximately constant. It is therefore necessary to resort to some form of undersampling. Let us denote the observed sequence of $k$-space samples as $Y \in \C^{M \times L}$, such that the samples taken at the $l$th read out,  $Y_{:,l} \in \C^M$ are given by:
\begin{equation}\label{eq: observation map}
Y_{:,l} = P(l) F X_{:,l}
\end{equation}
where $F$ again denotes the 2D discrete Fourier transform and $P(l)$ is the
projection onto a subset of coefficients measured at the $l$th read out (although the original MRF scheme used a sequence of spiral trajectories, for simplicity we will 
assume that the Fourier samples are only taken from a Cartesian grid). 
We can finally define the full linear observation map from the spatial magnetization sequence to the observation sequence as $Y=h(X)$ where $h$ is given by:
\begin{equation}
\label{eq: full observation map}
Y = h(X) = [P(1)F X_{:,1}, \ldots, P(N)FX_{:,N}].
\end{equation}
Together \eqref{eq: full response map} and \eqref{eq: full observation map} define the full MRF acquisition model from the parameter maps $\Tone, \Ttwo, \delta f$ and $\rho$ to the observed data $Y$.

\subsection{MRF matched filter reconstruction}
\label{sec: MRF reconstruction}
In \cite{Ma-MRF2013} the image sequence is first reconstructed using the regridding method \cite{Bydder2007} which approximates the least squares estimate for $X_{i,:}$ given $Y_{i,:}$:
\begin{equation}\label{eq: MRF image BP}
\hat{X}_{:,l} = F^H P(t)^T Y_{:,l} 
\end{equation}
or equivalently $\hat{X} = h^H (Y)$.  Due to the high level of undersampling, each reconstructed image contains significant aliasing. However, it is argued in \cite{Ma-MRF2013} that accurate estimates of the parameter maps can still be obtained by matching each voxel sequence to a predicted Bloch response sequence using a set of matched filters. This essentially averages the aliasing across the sequence, treating the aliasing as noise. While the technique provides impressive results, it ignores the main tenet of compressed sensing - that aliasing is interference and under the right circumstances can be completely removed (we explore this idea in detail in \S\ref{sec: CS-MRF}).

Mathematically, it will be convenient to view the matched filter solution as the projection of the voxel sequence onto a discretization of the Bloch response manifold as follows.

\subsubsection{Sampling the Bloch response manifold}
\label{sec: Bloch projection}


Suppose that we wished to approximate the projection of the sequence $X_{i,:}$ onto the cone of the Bloch response manifold.
One way to do this is to first take a discrete set of samples of the parameter space,  $\M$,  $\theta_i^{(k)} = \{\Tone_i^{(k)},\Ttwo_i^{(k)},\delta f_i^{(k)} \}$, ${k = 1, \ldots, P}$ and 
construct a `dictionary' of magnetization responses, $D = \{D_k\}$, $D_k =  B(\theta_i^{(k)};
\alpha, \TR)$, $k = 1, \ldots , P$. The density of such samples controls the accuracy of the final approximation of the projection operator.

We can similarly construct a look-up table (LUT) to provide an inverse for $B(\theta_i;
\alpha, \TR)$ on the discrete samples such that $\theta_i^{(k)} = \LUT_B (k)$. 

The projection onto the cone of the discretized response
manifold, $D$, can then be calculated using:
\begin{equation}\label{eq: bloch MF}
\hat{k}_i = \argmax_k \frac{\Real \langle D_k,X_{i,:}\rangle}{\|D_k\|_2}
\end{equation}
to select the closest sample $D_{\hat{k}_i}$ and 
\begin{equation}\label{eq: bloch MF proton density}
\hat{\rho}_i = \max\{\Real \langle D_{\hat{k}_i},X_{i,:}\rangle/\|D_{\hat{k}_i}\|_2^2,0\}
\end{equation}
for the proton density, where the real and max operations are necessary to select only positive correlations since negative $\rho_i$ are not admissible.  

If we allow $\rho_i$ to be complex valued (see Remark~\ref{rem: proton density}) then the projection equations become:
\begin{equation}\label{eq: bloch complex MF}
\hat{k}_i = \argmax_k \frac{| \langle D_k,X_{i,:}\rangle|}{\|D_k\|_2}
\end{equation}
and
\begin{equation}\label{eq: bloch MF complex proton density}
\hat{\rho}_i = 
\langle D_{\hat{k}_i},X_{i,:}\rangle/\|D_{\hat{k}_i}\|_2^2
\end{equation}

Equations \eqref{eq: bloch complex MF} and \eqref{eq: bloch MF complex proton density} are precisely the matched filter equations used in \cite{Ma-MRF2013}, applied to the distorted voxel sequences. We therefore see that one interpretation of matched filtering with the MRF dictionary model is to provide an approximate projection onto the cone of the Bloch response manifold for each voxel sequence.

A summary of the full MRF parameter map recovery algorithm (with a real valued proton density model) is given in Algorithm~\ref{alg: MRF}.

\begin{algorithm}[t]
 \caption{MRF reconstruction}\label{alg: MRF}
\begin{algorithmic}[0]
 \STATE \textbf{Given:} $Y$
\STATE Reconstruct $X$: 
\STATE $\hat{X} = h^H (Y)$
\STATE MF parameter estimation:
   \FOR{ $i=1:N$}
    \STATE   $\hat{k}_i = \argmax_k \Real \langle D_k,\hat{X}_{i,:}\rangle/\|D_{k}\|_2$
    \STATE   $\hat{\theta}_i = \LUT_{\B}(\hat{k}_i)$
    \STATE   $\hat{\rho}_i = \max\{0,\Real \langle D_{\hat{k}_i},\hat{X}_{i,:}\rangle/\|D_{\hat{k}_i}\|_2^2\}$
   \ENDFOR
\STATE \textbf{Return:} $\hat{\theta}, \hat{\rho}$
\end{algorithmic}
\end{algorithm}

\subsubsection*{Computational cost and accuracy}
Given that the discretized MRF dictionary can be very large ($\approx 500,000$ samples in \cite{Ma-MRF2013}), it is useful to consider the computational complexity of the above calculations as a function of parameter accuracy as this is the major computational bottleneck that we will encounter.

The accuracy with which we can estimate the parameters for a given voxel will depend on the accuracy of the approximate projection operator and the Lipschitz constants of the inverse mapping, $LUT_{\B}$.  We can achieve an approximate projection by generating an $\epsilon$-cover of $\B$ with $D_k$. As the dimension of $\B$ is $3$, this requires choosing $P \sim C\epsilon^{-3}$ atoms in our dictionary. Furthermore, as the projection operation described in \eqref{eq: bloch MF} takes the form of a nearest neighbour search, we can use fast nearest neighbour search strategies, such as the cover tree method \cite{Beygelzimer2006}, to quickly solve \eqref{eq: bloch
  MF} in $\mathcal{O}(L\ln (1/\epsilon))$ computations per voxel, instead of
the $\mathcal{O}(L\epsilon^{-3})$ necessary for exhaustive search. This effectively makes the speed of each application of $D$  on a par with that of a traditional fast transform. Similarly, the approximate inverse using $\LUT_\B$ can also be computed in $\mathcal{O}( \ln (1/\epsilon))$.

We could also consider enhancing such an estimate by exploiting the smoothness
of the response manifold, either by using local linear approximations of the manifold
\cite{Iwen-Maggioni-2012} or  by further locally optimizing the
projection numerically around the selected parameter set, once we are assured global convergence.
Such an enhancement could allow either for increased accuracy or reduced computation through the 
use of fewer parameter samples, however, we do not pursue these ideas further here.

\section{Compressed Quantitative Imaging}\label{sec: CS-MRF}

In order to generate a full compressed sensing framework for MRF we will identify sufficient conditions on the excitation pulse sequences and the $k$-space sampling, along with a suitable reconstruction algorithm, to guarantee recovery of the parameter maps from the observed $k$-space samples. As the dimension of our problem is large, $\dim((\R_+\times \M)^N) = 4N$, we do not consider the manifold reconstruction algorithms in \cite{Iwen-Maggioni-2012} as these scale poorly with the dimension of the manifold. 
Instead, we propose a CS solution based around the iterative projection algorithm of Blumensath \cite{TB-2011} which we will see has computational cost that is linear in the voxel dimension.
Our approach, which we call BLIP (BLoch response recovery via Iterated Projection), has three key ingredients:
a random pulse excitation sequence following the original MRF technique; a random subsampling strategy that can be shown to induce a low distortion embedding of $\R_+^N \times \M^N$ and an efficient iterated projection algorithm \cite{TB-2011} that imposes consistency with the Bloch equations. 
Moreover, the projection operation is the same nearest neighbour search described in section~\ref{sec: Bloch projection}. 

We first describe the iterative projection method and then consider the implications for the appropriate excitation and sampling strategies. 

\subsection{Reconstruction by Iterated Projection}
 
In \cite{TB-2011} a general reconstruction algorithm, the Projected
Landweber Algorithm (PLA) was proposed as an extension of the popular Iterated Hard
Thresholding Algorithm \cite{TB-IHT-2008,TB-IHT-2009}. PLA is applicable to
arbitrary union of subspace models as long as we
have access to a computationally tractable projection operator onto the union of subspace model within
the complete signal space. The algorithm is given by:
\begin{equation}
X^{(n+1)} = {\cal P}_{\cal A}(X^{(n)}+\mu h^H(Y-hX^{(n)}))
\end{equation}
where ${\cal P}_{\cal A}$ is the orthogonal projection onto the signal
model ${\cal A}$ such that 
\begin{equation}
{\cal P}_{\cal A}(X) = \argmin_{\tilde{X}\in {\cal A}} \|X-\tilde{X}\|_F
\end{equation}
and $\mu$ is the step size. 

The current theory for PLA  \cite{TB-2011} states that a sufficient condition for stable recovery of $X$ given $Y$ is that $h$ is a stable embedding
 - a so-called
Restricted Isometry Property (RIP) or bi-Lipshitz embedding - for the signal
model, ${\cal A}$. A mapping, $h$, is said to have the RIP (be a bi-Lipschitz embedding) for the signal model ${\cal A}$ if there exists a sufficiently small constant $\delta >0$
such that:
\begin{equation}\label{eq: bi-lip embedding}
(1-\delta) \| X-\tilde{X}\|_2^2 \leq \frac{N}{M}\|h(X-\tilde{X})\|_2^2 \leq (1+\delta) \| X-\tilde{X}\|_2^2 
\end{equation}
for all pairs $X$ and $\tilde{X}$ in ${\cal A}$. How to achieve such an embedding will be considered later in section~\ref{sec: excitation and sampling}.

The theory \cite{TB-2011} states that it is sufficient that $h$ satisfy the RIP with $\frac{M}{N}(1+\delta) < 1/\mu < \frac{3M}{2N} (1-\delta)$ for the guaranteed recovery.
If $h$ is essentially `optimal', e.g. a random ortho-projector, then we should set the step size $\mu \approx N/M$ since in the large system limit $\delta \rightarrow 0$. 

For our compressed sensing scenario the signal model ${\cal A}$ is the product set $(\R_+ \B)^N$ or, more precisely,  its discrete approximation $(\R_+ D)^N$ and the projection operator ${\cal P}_{\cal A}$ can be realized by separately projecting the individual voxel sequences $X_{i,:}^n$ onto the cone of the Bloch response manifold using the equations \eqref{eq: bloch MF} and \eqref{eq: bloch MF proton
  density}. 
Although $(\R_+ \B)^N$ is not itself a union of subspace model it can easily be extended to $(\R \B)^N$, which forms an uncountably infinite union of lines (1D subspaces). In fact, the theory of \cite{TB-2011} does not require ${\cal A}$ to be a union of subspace \cite{TB-2011} and is directly applicable to ${\cal A} = (\R_+ \B)^N$.  We therefore appear to have all the ingredients for a full compressed sensing recovery algorithm. This is summarized in Algorithm~\ref{alg:CS-MRF}. 
\begin{algorithm}[t]
 \caption{BLoch response recovery via Iterative Projection (BLIP) }\label{alg:CS-MRF}
\begin{algorithmic}[0]
\STATE \textbf{Given:} $Y$
 \STATE \textbf{Initialization:} $X^{(0)} = \mathbf{0}$, $\mu = N/M$
 \STATE Image sequence reconstruction
 \FOR{ $n=1; n:=n+1$ \textbf{until stopping criterion}}
\STATE Gradient step:
   \FOR {$l = 1:L$}
     \STATE $X_{:,l}^{(n+1/2)} = X_{:,l}^{(n)}+\mu F^H P(l)^T(Y_{:,l}-P(l) F
     X_{:,l}^{(n)})$;
   \ENDFOR
\STATE Projection step:
   \FOR{ $i=1:N$}
    \STATE   $\hat{k}_i = \argmax_k \Real \langle D_k,X_{i,:}^{(n+1/2)}\rangle/\|D_{k}\|_2$
    \STATE   $\hat{\rho}_i = \max\{0,\Real \langle D_{\hat{k}_i},X_{i,:}^{(n+1/2)}\rangle/\|D_{\hat{k}_i}\|_2^2\}$
    \STATE   $X_{i,:}^{(n+1)} = \hat{\rho}_i D_{\hat{k}_i}$
   \ENDFOR
 \ENDFOR
 \STATE Parameter map estimation:
   \FOR{ $i=1:N$}
    \STATE   $\hat{\theta}_i = \LUT_{\B}(\hat{k}_i)$
   \ENDFOR
   \STATE \textbf{Return:} $\hat{\theta}, \hat{\rho}$
\end{algorithmic}
\end{algorithm}


\begin{Remark} \label{remark: separating CS and parameter estimation} Note that the above procedure has separated out the parameter map estimation (by inverting the estimated Bloch responses) and the reconstruction of the magnetization image sequence (via the PLA). Indeed, as long as the partial $k$-space sampling provides a bi-Lipschitz embedding for all possible magnetization responses then the CS component of the imaging is well defined even if the Bloch response is not invertible.
\end{Remark}

\subsubsection{Step size selection}
\label{sec: adaptive step size}

Selection of the correct step size is crucial in order to attain good performance from these iterative projection based algorithms \cite{TB-NIHT-2010,TB-2011}. Note that the original parameter estimation in \cite{Ma-MRF2013} can be interpreted as an application of a single iteration of PLA with a step size $\mu = 1$ and iterating PLA with this step size tends to only deliver a modest improvement over the matched filter (single iteration). The matched filter also has the effect of underestimating the magnitude of $X$, and hence also the proton density map, as $h$ tends to shrink vectors uniformly (when it provides a stable embedding).

In contrast, when using the substantially more aggressive step size proposed by the theory we will see that significant improvements are observed in signal recovery and often in a very small number of iterations.

In practice, it is also beneficial to select the step size for PLA adaptively to ensure stability. Following the work on adaptive step size selection for IHT \cite{TB-NIHT-2010} we adopt the following heuristic. We begin each iteration by choosing $\mu = {N/M}$ as is suggested from the CS theory. Then after calculating a new proposed value for $X^{n+1}$ we calculate the quantity:
\begin{equation}
\omega = \kappa \frac{\|X^{n+1}-X^{n}\|_2^2}{\|h(X^{n+1}-X^{n})\|_2^2}
\end{equation}
for some $\kappa < 1$. If $\mu > \omega$ we reject this update, shrink the step size, $\mu \mapsto \mu/2$ and calculate a new proposed value for $X^{n+1}$. As with the Normalized IHT algorithm \cite{TB-NIHT-2010}, this form of line search is sufficient to ensure convergence of the algorithm irrespective of conditions on the measurement operator, and we will use this form of step size selection in all subsequent experiments.

\subsection{Strategies for subsampling $k$-space}
\label{sec: excitation and sampling}

We now consider what properties of the excitation response sequences and the $k$-space sampling pattern will ensure that the sufficient RIP conditions in the PLA theory are satisfied.


First note that, as the signal model treats each voxel as independent, we need to take at least $N \dim(\R_+ \M)$ measurements as this is the dimension of our model. Furthermore, since we only take a small number of measurements at each repetition time, we cannot expect to achieve a stable embedding without imposing further constraints on the excitation response. For example, if the embedding was induced in the first few repetition times and all further responses were non-informative we would not have taken sufficient measurements from the informative portion of the response. Therefore we consider responses that somehow spread the information across the repetition times. We will assume that the excitation sequence induces an embedding for the response map \eqref{eq: full response map}  (here random sequences seem to suffice), and identify additional conditions that enable us to develop a random $k$-space subsampling strategy with an appropriate RIP condition.
Our approach will follow the technique of random sampling as is common in compressed sensing measurement design, along with a pre-conditioning technique that has been used in the Fast Johnson-Lindenstrauss Transform \cite{Ailon-FJLT2009} and in spread spectrum compressed sensing \cite{Puy-2012}. It is also reminiscent of the Rauhut's bounded orthonormal systems \cite{Rauhut2010} and has a similar aim of ensuring that information is sufficiently spread within the measurement domain

The key vectors of interest are those that discriminate between pairs of possible signals within our model, namely the \emph{chords} of $\R_+\B$, which are the vectors of the form $u = X_{i,:}-\tilde{X}_{i,:}$ with $X_{i,:}, \tilde{X}_{i,:} \in \R_+\B$ and $X_{i,:} \neq \tilde{X}_{i,:}$. We will quantify the pre-conditioning requirement for the excitation response through the \emph{flatness} of such vectors which we define as follows.

\begin{Definition}
Let $U$ be a collection of vectors $\{u\}$ in $\C^L$. We denote the \emph{flatness}, $\lambda$, of the these vectors by:
\begin{equation}
 \lambda := \max_{u \in U} \frac{\| u\|_{\infty}}{\|u\|_2}.
\end{equation}
Note that from standard norm inequalities $L^{-1/2} \leq \lambda \leq 1$.
\end{Definition}

We will consider the chords of an excitation response to be sufficiently flat up to a log penalty if $\lambda \sim L^{-1/2} \log^\alpha L$ for $U = \{\R_+\B-\R_+\B\}\backslash\{0\}$.

In constructing our measurement function we also note that the signal model contains no spatial structure, and therefore we should expect to have to uniformly sample $k$-space in order to achieve a sufficient RIP. Note this is in contrast with the variable density sampling strategy proposed by \cite{Ma-MRF2013} which concentrated samples at the centre of $k$-space. It turns out that we can achieve this using a remarkably simple random subsampling pattern based on multi-shot Echo-planar Imaging (EPI) \cite{McKinnon: multi-shot EPI}. 

Let $F\in \C^{N \times N}$ denote the 2D discrete Fourier transform (assuming an image size of $\sqrt{N} \times \sqrt{N}$)
with $F_{i,:}$, $i = 1, \ldots, N$ denoting the $N$ 2D discrete Fourier basis vectors associated with the spatial frequencies $k_x(i),k_y(i) \in \{0, \ldots, \sqrt{N}-1\}$. Without loss of generality we assume that the vectors are ordered such that $k_x(i) = (i-1) \mod \sqrt{N}$, and $k_y(i) = \lfloor (i-1)/ \sqrt{N} \rfloor$.  We can now define a \emph{random Echo-Planar Imaging} measurement operator by $Y_{:,l} = P(\zeta_l)F X_{:,l}$, where $\zeta_l$ is a sequence of independent random variables uniformly drawn from $\{0, \ldots , p-1\}$ and $P(\zeta) \in \R^{M\times N}$ is defined as follows:
\begin{equation}
P_{i,j} = \left\{ \begin{array}{ll}
1& \mbox{if~} j =  i+\sqrt{N}\bigl(\zeta+(p-1)\lfloor (i-1)/\sqrt{N} \rfloor \bigr) \\
0& \mbox{otherwise}.
\end{array}               \right.
\end{equation} 
where for convenience we have assumed that $N$ is exactly divisible by $p$ so that $M=N/p$ is an integer.
%
%
In words, we uniformly subsample $k_y$ by a factor of $p$ with random shifts across time in $k_y$ of the set of $k$-space samples. This is illustrated in figure~\ref{fig: random EPI pattern}.

\begin{figure}[htbp]
\begin{center}
\includegraphics[width=0.7\linewidth]{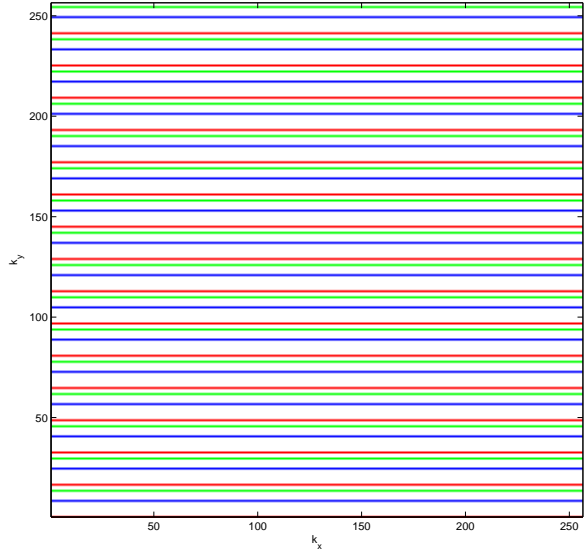}
\caption{The plot shows an instance of random EPI $k$-space sampling for three time frames: red, green and blue respectively. A colored pixel indicate that that $(k_x,k_y)$ frequency is sampled at the associated time frame through the projection operator, $P(\zeta_l)$. In this instance $p=16$.} 
\end{center}
\label{fig: random EPI pattern}
\end{figure}

Random EPI, along with an excitation response with appropriate chord flatness, $\lambda$, is then sufficient to provide us with a measurement operator, $h$, that is a bi-Lipschitz embedding on our signal model. In appendix~\ref{app: RIP proof} we prove the following theorem:

\begin{Theorem}[RIP for random EPI]\label{th: randEPI embedding}
Given an excitation response cone, $\R_+ \B$ of dimension $d_\B$, whose chords have a flatness $\lambda$ and a random EPI operator $h: (\R_+ \B)^N \rightarrow \C^{M \times L}$, then, with probability at least $1-\eta$, $h$ is a restricted isometry on $(\R_+ \B)^N - (\R_+ \B)^N$ with constant $\delta$ as long as:
\begin{equation}
\lambda^{-2} \geq C \delta^{-2} p^2 d_\B \log ( N /\delta \eta )
\end{equation}
for some constant $C$ independent of $p, N, d_\B, \delta$ and $\eta$.
\end{Theorem}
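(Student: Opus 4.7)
The proof of Theorem~\ref{th: randEPI embedding} follows the standard template for proving an RIP for a structured random measurement operator: establish a concentration inequality at a fixed point of the difference set, then lift to a uniform guarantee via a covering-net union bound.

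For the pointwise step, I would fix a difference $u = X - \tilde X$ with $X, \tilde X \in (\R_+\B)^N$ and use the independence of the random shifts $\zeta_l$ across $l$ to decompose
\begin{equation}
\|h(u)\|_F^2 \;=\; \sum_{l=1}^{L} \|P(\zeta_l)\, F\, u_{:,l}\|_2^2
\end{equation}
as a sum of $L$ independent random variables. By Parseval and the uniform marginal of $\zeta_l$, this sum has mean $(M/N)\|u\|_F^2$. Bernstein's inequality applies once the maximum summand and total variance are controlled. The chord flatness hypothesis enters through the voxelwise bound $|u_{i,l}|^2 \leq \lambda^2 \|u_{i,:}\|_2^2$, which sums across voxels to give $\|u_{:,l}\|_2^2 \leq \lambda^2 \|u\|_F^2$; together with $\|P(\zeta_l) F u_{:,l}\|_2^2 \leq \|u_{:,l}\|_2^2$ this bounds each rescaled summand $(N/M)\|P(\zeta_l) F u_{:,l}\|_2^2$ by $p\lambda^2\|u\|_F^2$, producing a pointwise tail of the form $\Prob\bigl(|(N/M)\|h(u)\|_F^2 - \|u\|_F^2| \geq \delta\|u\|_F^2\bigr) \lesssim \exp(-c\delta^2/(p^2\lambda^2))$, where one factor of $p$ comes from the $N/M$ rescaling and a second is expected from the EPI-induced correlation between samples within a time frame.

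For the uniform step, I would construct an $\epsilon$-net of the unit-Frobenius-norm slice of the difference cone $(\R_+\B)^N - (\R_+\B)^N$, apply the pointwise bound at each net point via a union bound, and transfer the conclusion to the full set with a standard perturbation argument. The smoothness of the Bloch response map $B$ of \eqref{eq: bloch map} guarantees that each voxel's cone $\R_+\B$ admits a covering of log-cardinality of order $d_\B\log(1/\epsilon)$; the task is to combine these voxelwise coverings into a global covering of the normalized difference cone whose log-size scales as $d_\B\log(N/\epsilon)$, exploiting the smooth parametrization by $\M$ and the uniformity of the random EPI scheme across voxels. Balancing $\epsilon$ against the concentration exponent and the failure probability $\eta$ would then produce the stated condition $\lambda^{-2} \geq C\delta^{-2} p^2 d_\B \log(N/\delta\eta)$.

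The main obstacle I anticipate is the covering step. A naive product covering of $(\R_+\B)^N$ in Frobenius norm gives $\log \mathcal{N}(\epsilon) \sim N d_\B \log(\sqrt{N}/\epsilon)$, which is exponentially weaker than the $d_\B \log(N/\epsilon)$ target appearing in the theorem. Closing this gap will likely require a chaining (Dudley-type) argument exploiting the smoothness of $\B$ across multiple scales, or a reduction that handles the product structure by leveraging the bi-Lipschitz parametrization of $\B$ together with the spatial mixing induced by $F$. Obtaining the exact factor of $p^2$ (rather than $p$) in the concentration tail is a second subtlety: the plain Bernstein bound only yields a single $p$, so the additional factor presumably has to be extracted by a more careful accounting of the variance under the non-i.i.d. row-shift sampling inherent to random EPI.
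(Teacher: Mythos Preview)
Your pointwise concentration step is essentially correct and matches the paper's Lemma~\ref{th: single voxel chord isometry}: summing over $l$ and using flatness to bound each summand gives a Chernoff exponent of order $\delta^2/(p\lambda^2)$. The gap is precisely the one you flag in your final paragraph, and its resolution is structural rather than analytic --- no chaining is needed. The random EPI operator has the property that $h^H h$ is \emph{block-separable} over the $M = N/p$ groups of aliased voxels: because $P(\zeta_l)$ subsamples $k_y$ periodically with period $p$, applying $F^H P(\zeta_l)^T P(\zeta_l) F$ to an image mixes each voxel only with its $p-1$ aliases in the $y$-direction, and these alias groups $\Lambda_1,\ldots,\Lambda_M$ partition $\{1,\ldots,N\}$. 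Hence $\|h(u)\|_F^2 = \sum_{i=1}^M \|h_{\Lambda_i}(u_{\Lambda_i,:})\|_F^2$, and it suffices to establish the RIP for each $h_{\Lambda_i}$ on the $p$-fold product $(\R_+\B)^p - (\R_+\B)^p$ separately (Lemma~\ref{th: aliased voxel isometry}) and then union-bound over the $M$ groups.

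This is also where the second factor of $p$ actually originates, not from within-frame correlations as you conjecture. The normalized difference set of $(\R\B)^p$ has dimension $2p\, d_\B - 1$, so its $\epsilon$-net has log-cardinality of order $p\, d_\B \log(p/\epsilon)$. Combining the concentration exponent $c_1 \sim (p\lambda^2)^{-1}$ with the covering requirement $c_1 \gtrsim \delta^{-2} p\, d_\B \log(\cdot)$ of Lemma~\ref{th: boxdim isometry} yields $\lambda^{-2} \gtrsim \delta^{-2} p^2 d_\B \log(\cdot)$; the final union bound over the $M \leq N$ alias groups contributes only an additive $\log N$ inside the logarithm. Without the alias-group decomposition your covering dimension is $2N d_\B$, and chaining will not reduce that to the stated bound --- the saving is algebraic, coming from the periodic structure of EPI, not from a sharper metric-entropy estimate.
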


Specifically, if $\lambda = \mathcal{O}( L^{-1/2} \log^\alpha L)$ then we require:
\begin{equation} 
L = \mathcal{O} (\delta^{-2} p^2 d_\B \log( N/\delta \eta)\log^{\alpha}(L))
\end{equation}
excitation pulses. While we might hope to get $L$ of the order of $p d_\B$ it appears that this is not possible, at least for a worst case RIP analysis based on the flatness criterion alone. Indeed, in the experimental section we will provide evidence to suggest that $L\sim p^2$ is indeed the scaling behaviour that we empirically observe.

\begin{Remark}
It might seem surprising that the proposed scheme uses uniform random sampling in $k$-space whereas it is usually advisable to use a variable density sampling strategy for compressed sensing solutions for MRI. Indeed, there is good theoretical justification for variable density sampling patterns \cite{Adcock-2011,Puy2011}. Our theory above is not inconsistent with such results. Variable density sampling is advantageous because the underlying signal model - sparsity in the wavelet domain - is not incoherent with the Fourier basis \cite{Puy2011,Adcock-2011}. However, the Fourier basis is incoherent with a voxel-wise signal model as used above. This is not to say that spatial structure cannot be effectively exploited within a compressed quantitative imaging scheme or that variable density sampling would not then be of benefit. However, as the basic MRF based model does not exploit spatial structure we argue that uniform random sampling is appropriate here. 
\end{Remark}

The challenge of incorporating spatial regularity into the signal model is discussed next.

%
%
%
%
%
%
%
%
%

\subsection{Extending the Bloch response model}
\label{sec: model extension}
Our current compressed sensing model takes no account of additional structure within the parameter maps. This structure could, for example, be the piecewise smoothness of the parameter maps or the magnetization response maps, or an imposed segmentation of the image into different material compositions. In general, it is not clear how such additional regularization can be included in a principled manner, although many heuristic approaches could of course be adopted, as for example in \cite{Doneva2010}. This is because the parameter values are encoded within the samples of the Bloch response manifold, and therefore the spatial regularity would need to be mapped through the Bloch response leading to a non-separable high dimensional nonlinear signal model.

The one exception, which we consider here, is the regularization of the proton density map, or at least a close relative. We note, however, that in this instance the theory relies on the real non-negative proton density model and does not directly extend to the complex case.


Let us define the \emph{pseudo-density}, $\tilde{\rho}$ as the proton density map scaled by the norm of the Bloch response
vector, so that:
\begin{equation}
\tilde{\rho}_i = \rho_i \|B({\bf \theta}_i; \alpha, \TR)\|_2.
\end{equation}
Similarly we can define the normalized Bloch response as:
\begin{equation}
\eta_{i,:} = \tilde{B}({\bf \theta}_i; \alpha, \TR) \triangleq B({\bf \theta}_i; \alpha, \TR)/\|B({\bf \theta}_i; \alpha, \TR)\|_2
\end{equation}
and the normalized Bloch response manifold, $\tilde{\B}$ as:
\begin{equation}
\tilde{\B} = \left\{\eta_{i,:} = \tilde{B}({\bf \theta}_i; \alpha, \TR) \mbox{~ for some~} \theta_i \in \M\right\}
\end{equation}

The pseudo-density will be roughly the same as the density, as long as the Bloch response sequences are all of approximately the same magnitude. The transform to $\{\tilde{\rho},\eta\}$ normalizes the manifold $\tilde{\B}$ so that we can more easily calculate projections onto product signal models of the form $\{\tilde{\rho},\eta\} \in \Sigma \times \tilde{\B}^N$, where $\Sigma$ denotes the set of spatially regularized pseudo-density maps.
To do this we will find the following proposition useful:

\begin{Proposition}
Given an $X \in \C^{N \times L}$, suppose that the projection onto the signal model $\Sigma \times \tilde{\B}^N$ is given by $\hat{\tilde{\rho}} \in \Sigma$ and $\hat{\eta}_{i,:} \in \tilde{\B}$ and results in $\hat{\tilde{\rho}}_i \geq 0$ for all $i$, then:
\begin{equation}\label{eq: eta proj}
\hat{\eta}_{i,:} = \argmax_{\eta_{i,:} \in \tilde{\B}}  z_i
\end{equation}
and
\begin{equation}\label{eq: rho proj}
\hat{\tilde{\rho}} = \argmin_{\tilde{\rho} \in \Sigma} \| \tilde{\rho} - z\|_2^2
\end{equation}
where $z_i = \Real \langle \eta_{i,:},X_{i,:} \rangle$.
\end{Proposition}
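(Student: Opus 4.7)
The plan is to reduce the joint projection problem onto $\Sigma \times \tilde{\B}^N$ to two decoupled optimizations by using the fact that elements of $\tilde{\B}$ have unit norm and exploiting the assumed non-negativity of $\hat{\tilde{\rho}}$.

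First I would write out the projection problem explicitly. Since the element of $\C^{N\times L}$ represented by $(\tilde{\rho},\eta)$ has rows $\tilde{\rho}_i \eta_{i,:}$, the projection is
\begin{equation*}
\min_{\tilde{\rho}\in\Sigma,\, \eta_{i,:}\in\tilde{\B}} \sum_{i=1}^N \| X_{i,:} - \tilde{\rho}_i \eta_{i,:} \|_2^2.
\end{equation*}
Expanding each summand and invoking the normalization $\|\eta_{i,:}\|_2 = 1$ (since $\eta_{i,:}\in\tilde{\B}$) together with the fact that $\tilde{\rho}_i\in\R$, one obtains
\begin{equation*}
\| X_{i,:} - \tilde{\rho}_i \eta_{i,:} \|_2^2 = \|X_{i,:}\|_2^2 - 2\tilde{\rho}_i\, \Real\langle \eta_{i,:},X_{i,:}\rangle + \tilde{\rho}_i^2.
\end{equation*}

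Next I would decouple the minimization. For any fixed $\tilde{\rho}_i \geq 0$, the only $\eta_{i,:}$-dependent term is $-2\tilde{\rho}_i\, z_i$ with $z_i = \Real\langle\eta_{i,:},X_{i,:}\rangle$, which (for $\tilde{\rho}_i \geq 0$) is minimized by choosing $\eta_{i,:}$ to \emph{maximize} $z_i$ over $\tilde{\B}$. This yields \eqref{eq: eta proj} and, crucially, the optimal $\hat{\eta}_{i,:}$ does not depend on the particular value of $\tilde{\rho}_i$ so long as $\tilde{\rho}_i \geq 0$; the standing assumption $\hat{\tilde{\rho}}_i \geq 0$ is exactly what allows this separation (otherwise the sign of $\tilde{\rho}_i$ would flip the direction of optimization in $\eta_{i,:}$).

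Finally I would complete the square in $\tilde{\rho}_i$. Substituting the maximized value $\hat{z}_i$ back into the objective gives
\begin{equation*}
\sum_i \bigl(\|X_{i,:}\|_2^2 - \hat{z}_i^2\bigr) + \sum_i (\tilde{\rho}_i - \hat{z}_i)^2,
\end{equation*}
so the only remaining $\tilde{\rho}$-dependent term is $\|\tilde{\rho} - \hat{z}\|_2^2$, yielding \eqref{eq: rho proj}. The only real conceptual point — which I would flag as the subtle step rather than a technical obstacle — is the role of the non-negativity assumption: without it the argmax in \eqref{eq: eta proj} could become an argmin and the two subproblems would no longer decouple cleanly. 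Everything else is a direct algebraic expansion.
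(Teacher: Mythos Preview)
Your proof is correct and follows essentially the same route as the paper: write out the projection, expand using $\|\eta_{i,:}\|_2=1$, use the non-negativity of $\hat{\tilde{\rho}}_i$ to decouple the $\eta$-maximization from $\tilde{\rho}$, and then complete the square in $\tilde{\rho}$. Your presentation is in fact slightly more explicit than the paper's in distinguishing the maximized value $\hat{z}_i$ from the generic $z_i$.
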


\begin{proof}
By definition of the orthogonal projection we have:
\begin{equation}\label{eq: orth proj def} 
\{\hat{\eta},\hat{\tilde{\rho}}\} = \argmin_{\eta,\tilde{\rho}} \sum_i \sum_j |X_{i,j} - \tilde{\rho}_i \eta_{i,j}|^2
\end{equation}
Expanding \eqref{eq: orth proj def}, substituting in $z_i$ and noting that $\|\eta_{i,:}\|_2 = 1$ we have:
\begin{equation} \label{eq:orth proj expansion}
\{\hat{\eta},\hat{\tilde{\rho}}\} = \argmin_{\eta \in {\cal{\B}},\tilde{\rho} \in \Sigma} \sum_i \left(  \tilde{\rho}_i^2 - 2\tilde{\rho}_i z_i  \right).
\end{equation}
By assumption $\hat{\tilde{\rho}}_i$ is non-negative so the expression is minimized with respect to $\eta_{i,:}$ by \eqref{eq: eta proj} independently of ${\tilde{\rho}}_i$. Finally we note that \eqref{eq: rho proj} holds since:
\begin{equation}
\sum_i \left(  \tilde{\rho}_i^2 - 2\tilde{\rho}_i  z_i \right) = \|\tilde{\rho}-z\|_2^2 + \mbox{const.}
\end{equation}
\qquad \end{proof}

One way to impose spatial regularity on $\tilde{\rho}$ is to force it to be sparse in the wavelet domain for some appropriate orthogonal wavelet representation, $W$, such that $c = W \tilde{\rho}$. In this case, the projection \eqref{eq: rho proj} can be written as $\hat{\tilde{\rho}} = W^T \hat{c}$ with:
\begin{equation}\label{eq: hard threshold}
\hat{c} = {\cal H}_k (W z )
\end{equation}
where ${\cal H}_k$ denotes an element-wise hard thresholding \cite{TB-IHT-2008,TB-IHT-2009} that retains only the largest $k$ elements. 

Under the non-negativity assumption the projection operator can be formed by applying \eqref{eq: eta proj} followed by \eqref{eq: hard threshold}. This results in a simple algorithm for incorporating a degree of spatial regularization within the compressed quantitative imaging framework. In the next section we will see, however, that the inclusion of this addition spatial constraint adds little to the performance of the compressed sensing approach, suggesting that the Bloch equation constraint dominates the performance.

\begin{Remark}
The above calculation is only guaranteed to be valid when the resulting pseudo-density map is non-negative. In theory, applying such an operator when we incur negative values of pseudo-density could give a projection that is not optimal. However, in practice we have found that this is not a problem as we always impose non-negativity on both the pseudo-density and the correlations with the Bloch response, $z_i$, in order to ensure that the projection is physically meaningful.  
\end{Remark}

%
%


\section{Experiments}
\label{sec: experiments}
In order to test the efficacy of BLIP for compressed quantitative imaging we performed a set of  simulations using an anatomical brain phantom, segmented into various material compositions. This provided a well defined ground truth and enabled us to demonstrate image sequence recovery and parameter map estimation as a function of the $k$-space subsampling factor and the excitation sequence lengths. 

\subsection{Experimental Set up}
The key ingredients of the experimental set up are described below.
 
\subsubsection*{Anatomical Brain Phantom}

To develop realistic simulations that also provide a solid ground truth we have adapted the anatomical brain phantom of \cite{Collins-1998}, available at the BrainWeb repository \cite{Brainweb}. A $217 \times  181$ slice (slice 40) of the crisp segmented anatomical brain was used and restricted to contain only 6 material components, listed in table~\ref{table: material props}. The phantom was further zero padded to make a $256\times 256$ image to simplify the computations. Since we are using the crisp segmentation the model is somewhat idealized and does not address inaccuracies associated with partial volume effects or many of the other issues with real MRI. However, it serves as a useful test bed to provide a good proof-of-concept for our proposed techniques.

The material properties were chosen to be both representative of the correct tissue type \cite{Hornak-MRI webbook} and challenging: the proton densities were fixed to give little discrimination for individual parameters and were set so that there is not an exact match to the sampling of the Bloch response manifold. 

\begin{table}
\caption{Tissue types used from MNI segmented brain phantom} 
\centering 
\begin{tabular}{|l|c||c|c|c|}
 \hline 
 \rule[-1ex]{0pt}{2.5ex} Tissue & index & proton density & $\Tone$ (ms) & $\Ttwo$ (ms) \\ 
 \hline 
 \hline
 \rule[-1ex]{0pt}{2.5ex} Background & 0 & 0 & - & - \\ 
 \hline 
 \rule[-1ex]{0pt}{2.5ex} CSF & 1 & 100 & 5012 & 512 \\ 
 \hline 
 \rule[-1ex]{0pt}{2.5ex} Grey matter & 2 & 100 & 1545 & 83 \\ 
 \hline 
 \rule[-1ex]{0pt}{2.5ex} White matter & 3 & 80 & 811 & 77 \\ 
 \hline 
 \rule[-1ex]{0pt}{2.5ex} Adipose & 4 & 80 & 530 & 77 \\ 
 \hline 
 \rule[-1ex]{0pt}{2.5ex} Skin/Muscle & $5/6$ & 80 & 1425 & 41 \\ 
 \hline 
 \end{tabular}  
 \label{table: material props} 
\end{table}

The segmented brain is shown, colored by index, in figure~\ref{fig: segmented brain phantom}.

\begin{figure}[htbp]
\begin{center}
\includegraphics[width=\linewidth]{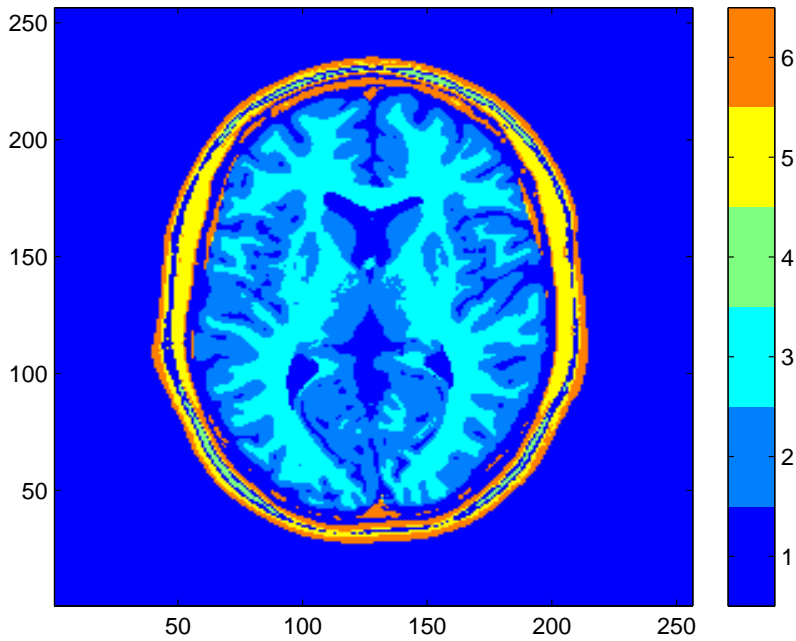}
\caption{The MNI segmented anatomical brain phantom \cite{Collins-1998} colored by index: $0=$Background, $1=$CSF, $2=$Grey Matter, $3=$White Matter, $4=$Fat, $5=$Muscle/Skin, $6=$Skin.\label{fig: segmented brain phantom}} 
\end{center}
\end{figure}

\subsubsection*{Pulse excitation}

For the excitation sequences we use IR-SSFP sequences (exemplar code can be found in the supplementary material of \cite{Ma-MRF2013}) with random flip angles drawn from an independent and identically distributed Gaussian distribution:
\begin{equation}
\alpha_l \sim \N(0,\sigma_\alpha^2)
\end{equation}
with a standard deviation, $\sigma_\alpha = 10$ degrees. The repetition times were uniformly spaced at an interval of $10$ ms. While we also experimented with randomizing repetition times, we did not find that these significantly changed the performance of the techniques. Constant repetition time intervals also mean that we can directly assess the imaging speed in terms of the sequence length, $L$.

The value of $\sigma_\alpha$ was chosen empirically to provide reasonable persistence of excitation for the expected $\Tone$ and $\Ttwo$ responses. Figure~\ref{fig: sequence flatness} (left) shows the magnitude of the response differences for the set of tissue types listed in table~\ref{table: material props}. It can be seen that the difference in the responses does indeed persist over time. Using these differences we can also estimate their flatness. Figure~\ref{fig: sequence flatness} (right) shows how the flatness varies as a function of sequence length. We see that $\lambda^{-2}$ roughly scales proportionally to $L$, as desired, with a slight downward sublinear trend.

\begin{figure}[t]
\begin{center}
\begin{minipage}{0.49\linewidth}
  \includegraphics[width=\linewidth]{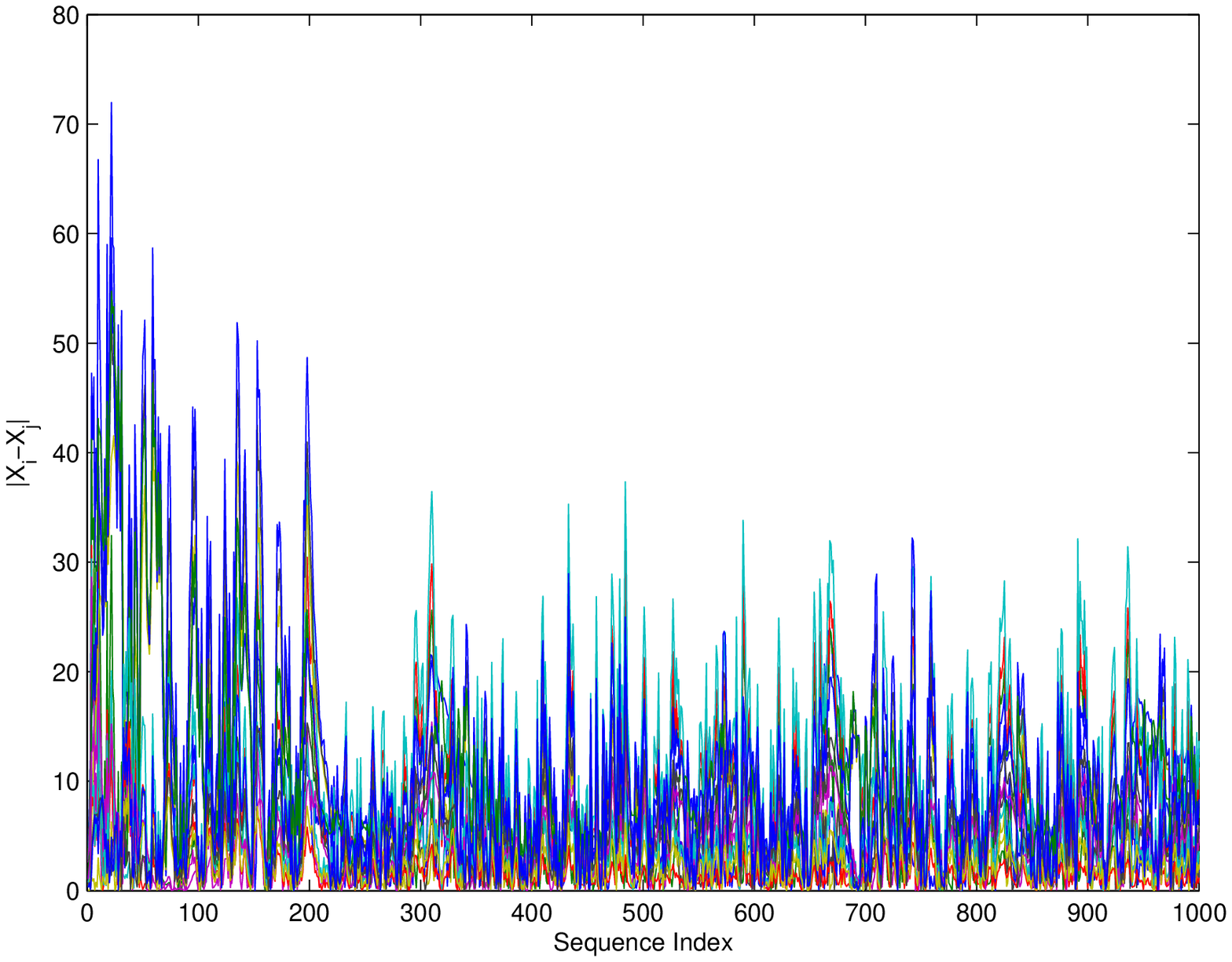}
\end{minipage}
\begin{minipage}{0.49\linewidth}
  \includegraphics[width=\linewidth]{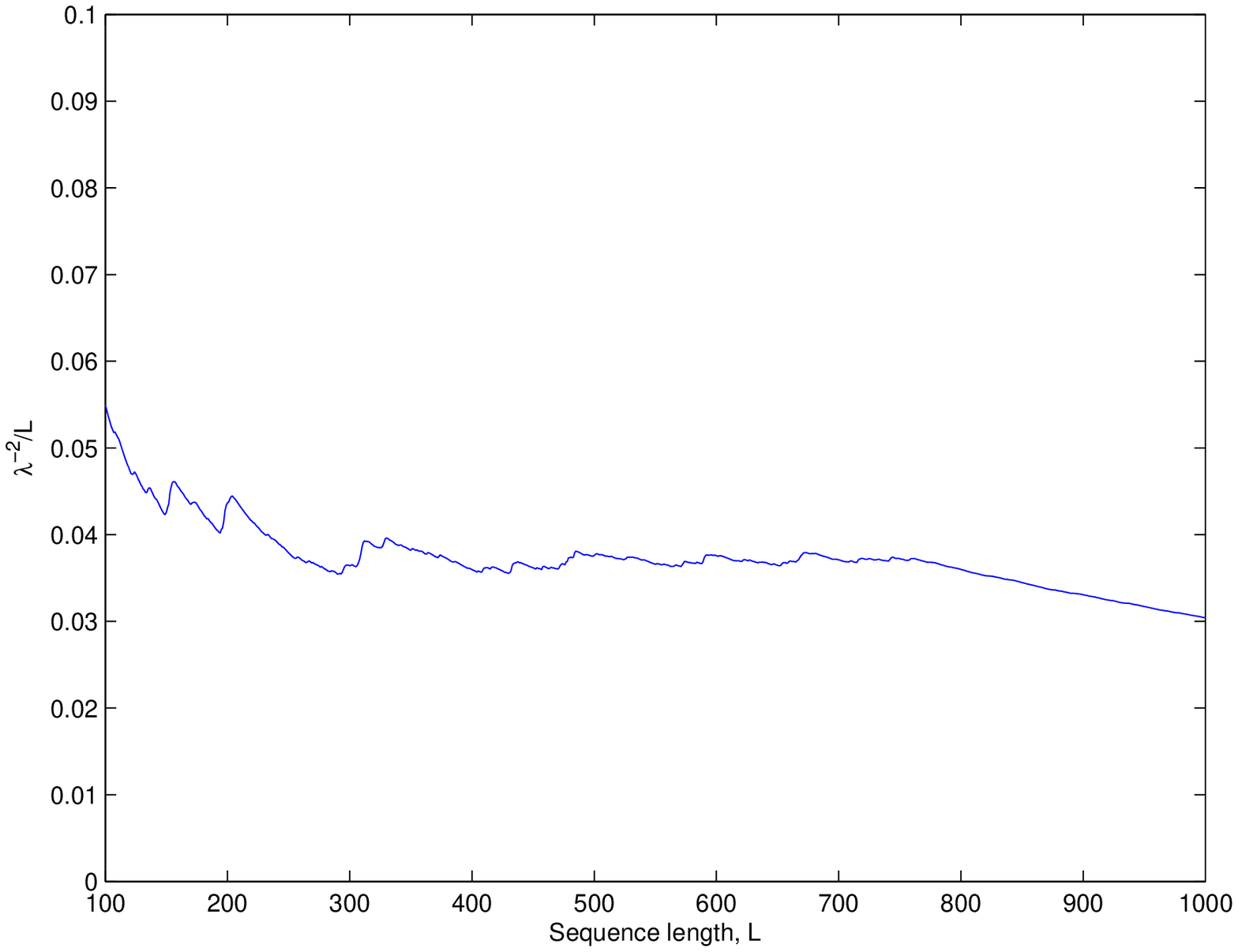}
\end{minipage}
\caption{Left: examples of the response differences for pairs of tissue types given in table~\ref{table: material props} when using IR-SSFP pulse sequence excitation with random flip angles. Right: $\lambda^{-2}/L$ as a function of sequence length for the repsonse differences plotted on the left. From this plot it can be deduced that $\lambda^{-2}$ grows roughly proportionally to $L$.}
\label{fig: sequence flatness}
\end{center}
\end{figure}

\subsubsection*{Discretized Bloch response}
The Bloch response manifold was sampled in a similar manner to \cite{Ma-MRF2013}, however, to simplify things we have only considered variation in $\Tone$ and $\Ttwo$ here, assuming the off resonance frequency is equal to zero. Similar to \cite{Ma-MRF2013}, discrete samples for $\Tone$ were selected to go between 100 and 2000 in increments of 20 and from 2300 to 6000 in increments of 300. $\Ttwo$ was sampled between 20 and 100 in increments of 5, from 110 to 200 in increments of 20 and from 400 to 1000 in increments of 200. This results in a dictionary of size $3379\times L$. This range of $\Tone$ and $\Ttwo$ values clearly spans the anticipated range for the tissue types listed in table~\ref{table: material props}.

\subsubsection*{Subsampling strategy}
For the $k$-space subsampling we use the random EPI sampling scheme detailed in section~\ref{sec: excitation and sampling}. Specifically, we fully sample the $k$-space in the $k_x$ direction while regularly subsampling the $k_y$ direction by a factor of $p$. This deterministic sampling pattern was then cyclically shifted by a random number of $k_y$ lines at each repetition time. In most experiments $p$ is set to $16$ (sampling at $6.25\%$ of Nyquist).

\subsubsection{Reconstruction algorithms}

In the experiments below we compare three distinct algorithms for reconstructing the magnetization image sequences. These are: (1) the original MRF algorithm; (2) BLIP algorithm presented in Algorithm~\ref{alg:CS-MRF}; and (3) BLIP with spatial regularization as detailed in section~\ref{sec: model extension}. For both iterative algorithms we use the adaptive step size strategy set out in section~\ref{sec: adaptive step size} with $\kappa = 0.99$. For the spatial regularization we use a Haar wavelet representation with hard thresholding as detailed in section~\ref{sec: model extension}, retaining only the largest 12000 wavelet coefficients at each iteration.

As the MRF algorithm (with step size equal to $1$) underestimates the value of the image sequence (and also the proton density) we include in the appropriate plots the performance of a rescaled MRF algorithm where the step size is $\mu = N/M$. 

Finally, in some of the plots we also include the performance for an oracle estimator. This oracle is given the fully sampled image sequence data as an input and then projects each voxel sequence onto the discretized Bloch response. In this way we can differentiate between errors associated with the Bloch response discretization and the image sequence reconstruction.

\subsection{Results}

All the experiments were evaluated using a signal-to-error-ratio (SER) in decibels (dBs), calculated as $20 \log_{10} \tfrac{\|x\|_2}{\|x-\hat{x}\|_2}$ for a target signal $x$ with the estimate $\hat{x}$. For $\Tone$ and $\Ttwo$ this corresponds to the measures $\mbox{T}_1\mbox{NR}$ and $\mbox{T}_2\mbox{NR}$ that has been used to gauge the efficiency of relaxation time acquisition schemes \cite{Deoni-DESPOT-2003}. To avoid issues of estimates associated with empty voxels the errors are only calculated over regions with a non-zero proton density value.

In all experiments, unless stated otherwise, the following parameters were used: the undersampling ratio for the operator $h(\cdot)$ was fixed at $1/16$ and for both the iterative algorithms a maximum of $20$ iterations was allowed, though in many cases fewer iterations would have sufficed.

\subsubsection{Performance as a function of excitation sequence length} 

Our first experiment evaluates the performance of the algorithms in terms of the sequence length, which was varied between $10$ and $1000$ pulses. Here we can separately evaluate the performance of the compressed sensing component and the recovery of the parameter maps. 

The compressed sensing recovery performance, evaluated by the SER of the image sequence reconstruction, $X$, is shown in figure~\ref{fig: SER vs sequence length}~(a).

First, note that the strange behaviour of the oracle estimator for small sequence lengths is probably due to the failure of $f(\cdot)$ to achieve a low distortion embedding. This would result in it being easier to approximate voxel sequences with a given  element of the Bloch response approximation. Beyond this the performance reaches a plateau at approximately SER $= 27$ dB which can be considered to be the error associated with the discretization of the Bloch response.

The performance of both BLIP algorithms is roughly equivalent. They both sharply increase in performance at a sequence length of $100$ and then tend to a plateau beyond this with an SER of about $0.5$ dB below that of the oracle estimator. This suggests that we can achieve near perfect compressed sensing reconstruction with a sequence containing as few as 100 pulses. In this simulation there was no significant gain from the additional inclusion of the spatial regularization. 

The performance of MRF is significantly worse. We first highlight that the non-rescaled MRF performance is terrible, however, as noted earlier, this is mainly due to the shrinkage effect of the subsampling operator, $h(\cdot)$. Correcting for this with appropriate rescaling leads to significantly improved estimation. However, we see that the SER increases slowly as a function of sequence length, which is consistent with the argument that the matched filter is averaging over the aliasing rather than cancelling it, as presented in section~\ref{sec: MRF reconstruction}. Furthermore, even for a sequence length of 1000 the SER still only reaches $12$dB.

Subfigures~ \ref{fig: SER vs sequence length} (b), (c) and (d) show the SER for the estimation of the parameter maps, proton density, $\Tone$ and $\Ttwo$ respectively, and reflects the combined performance of inverting both $h(\cdot)$ and $f(\cdot)$. In each case the two iterative algorithms approach the oracle performance for sequence lengths of $L \geq 200$, indicating successful parameter map recovery. Furthermore, the performance for the $\rho$ estimates and $\Ttwo$ estimates do not improve substantially beyond the $L = 200$ value as $L$ is increased reaching a plateau at approximately $16$dB which corresponds to a root mean squared (rms) error of approximately $30$ms. In contrast, the $\Tone$ estimation performance does increase from roughly $20$dB ($213$ms rms error) at $L=200$ to $30$dB ($67$ms rms error) at $L = 1000$. This may be a function of the isometry properties (in the $\Tone$ direction) for the Bloch response embedding, and is possibly related to the longer time constants of $\Tone$. It is an open question as to whether a better excitation sequence can be designed to improve the $\Tone$ estimates for small $L$.

\begin{figure}[t]
\begin{center}
\begin{minipage}{0.49\linewidth}
  \includegraphics[width=\linewidth]{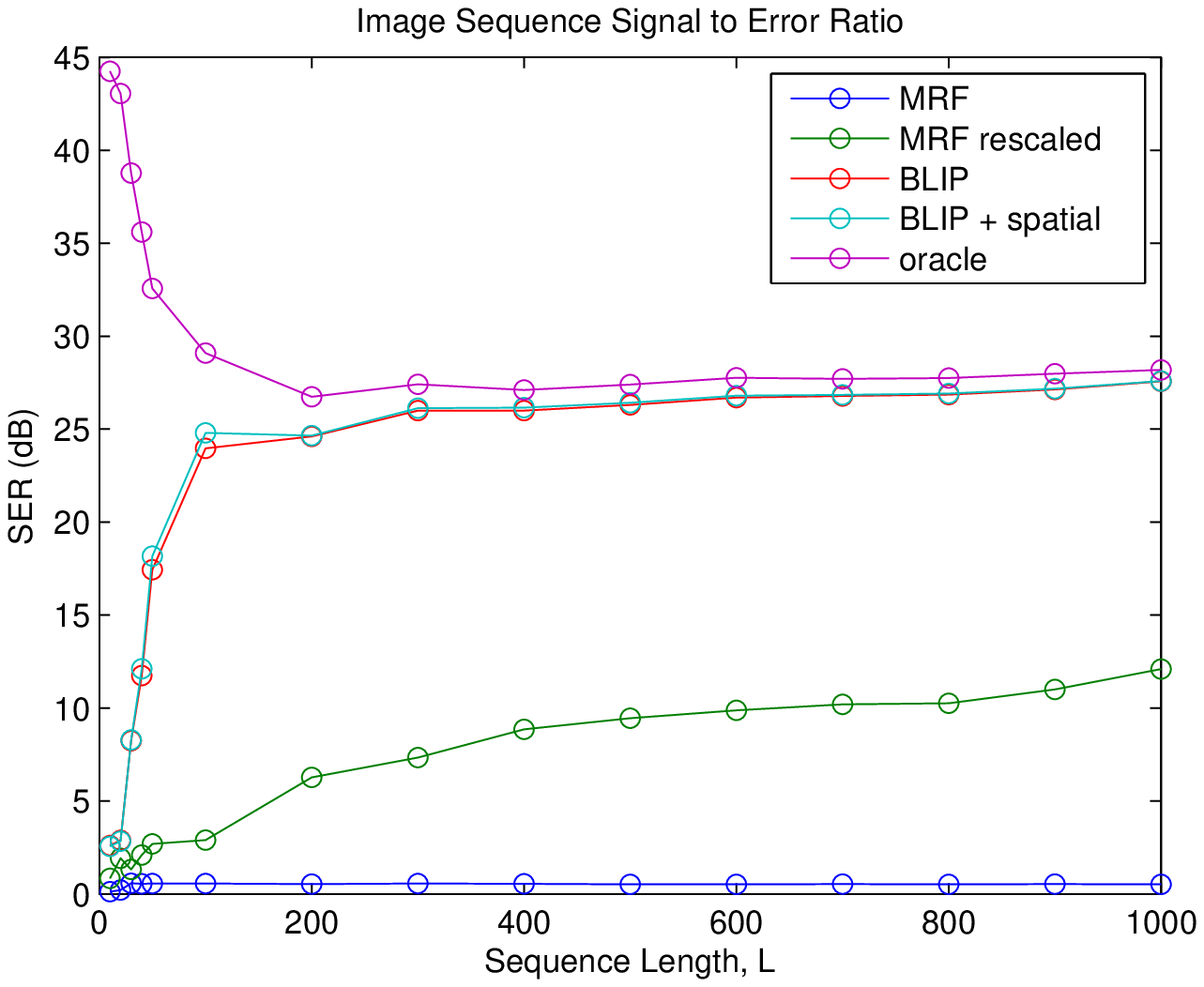}
  \begin{center}
  {(a)}
  \vspace{12pt}
  \end{center}
\end{minipage}
\begin{minipage}{0.49\linewidth}
  \includegraphics[width=\linewidth]{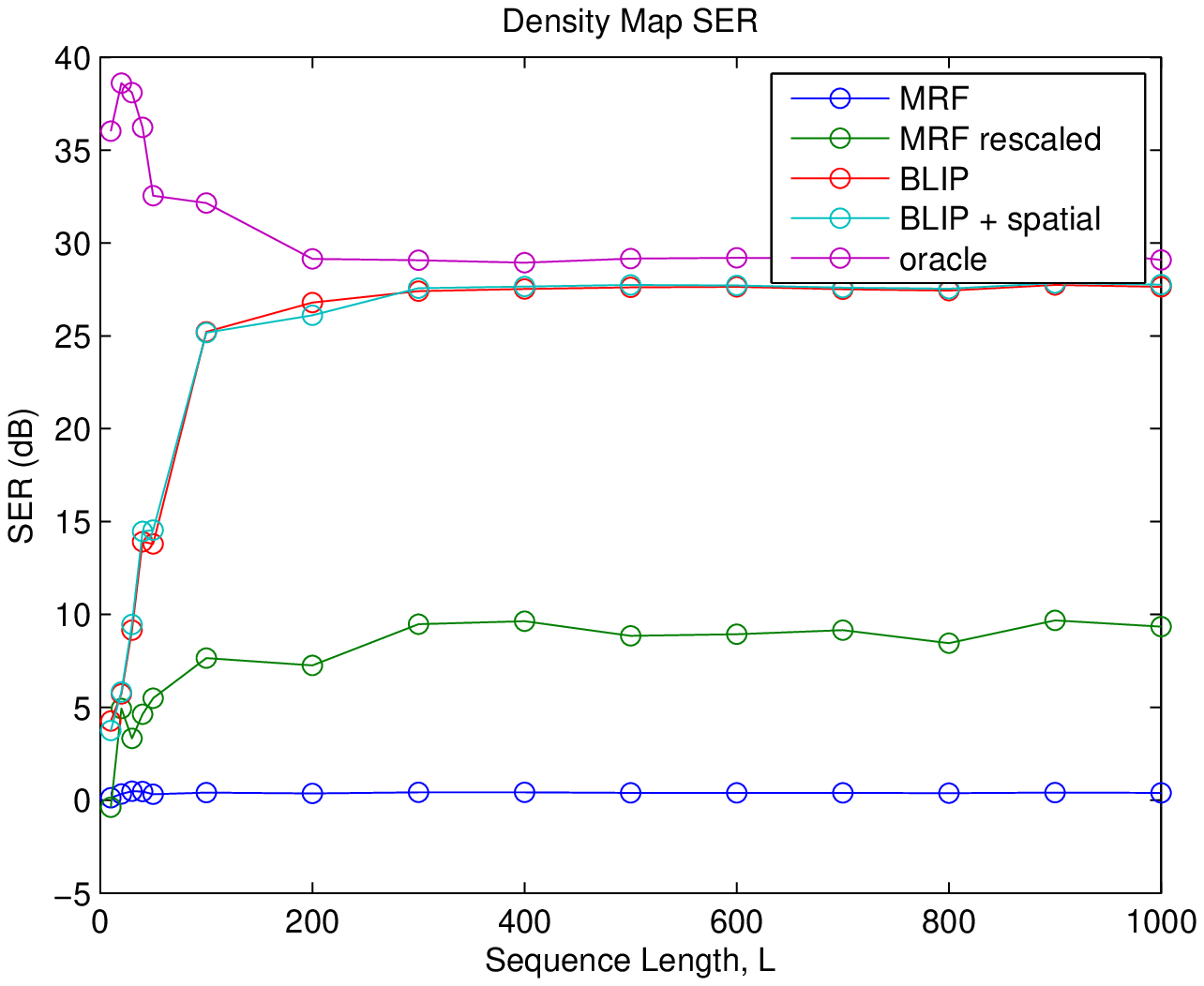}
  \begin{center}
  {(b)}\vspace{12pt}
  \end{center}
\end{minipage}
\begin{minipage}{0.49\linewidth}
  \includegraphics[width=\linewidth]{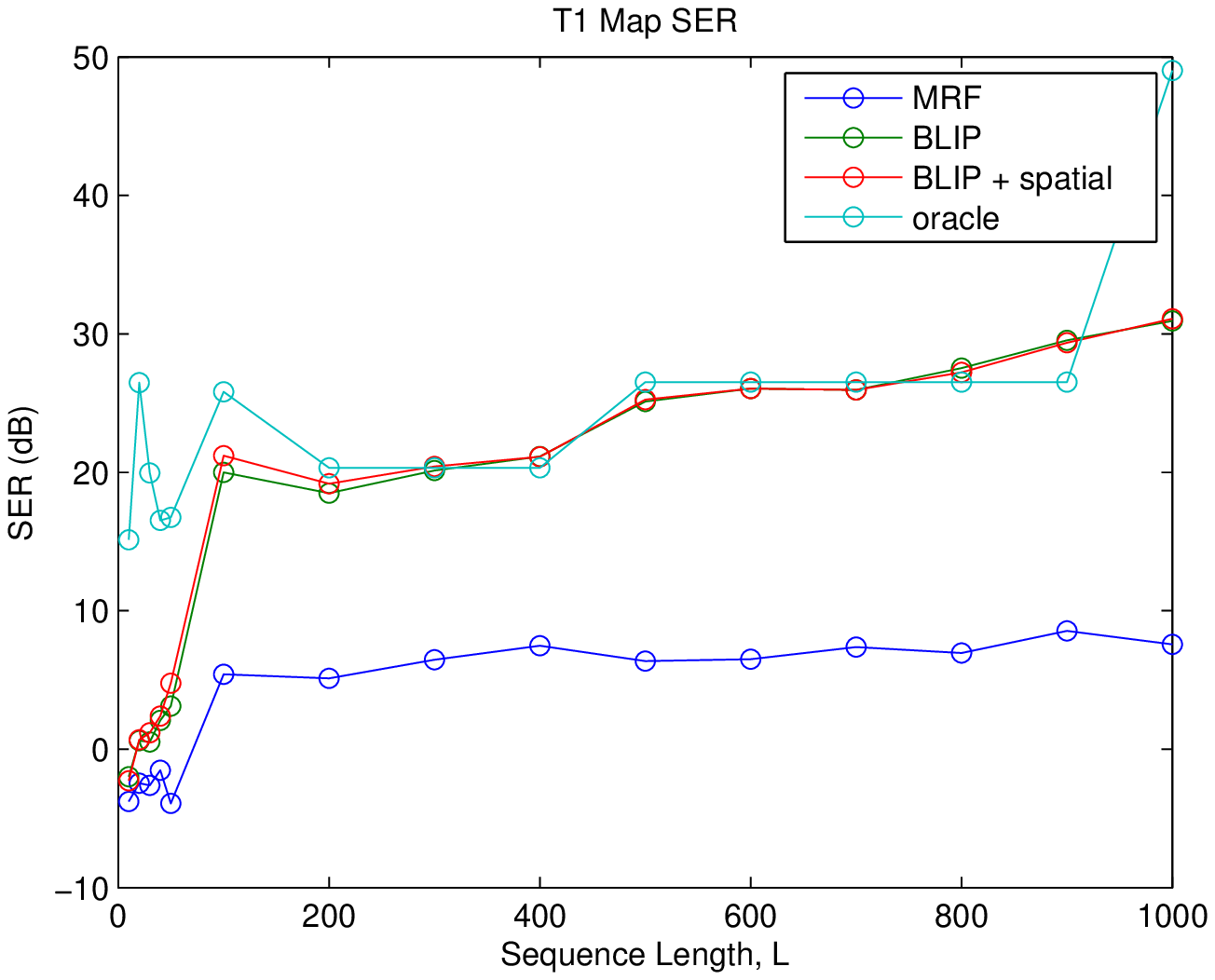}
  \begin{center}
  {(c)} \vspace{12pt}
  \end{center}
\end{minipage}
\begin{minipage}{0.49\linewidth}
  \includegraphics[width=\linewidth]{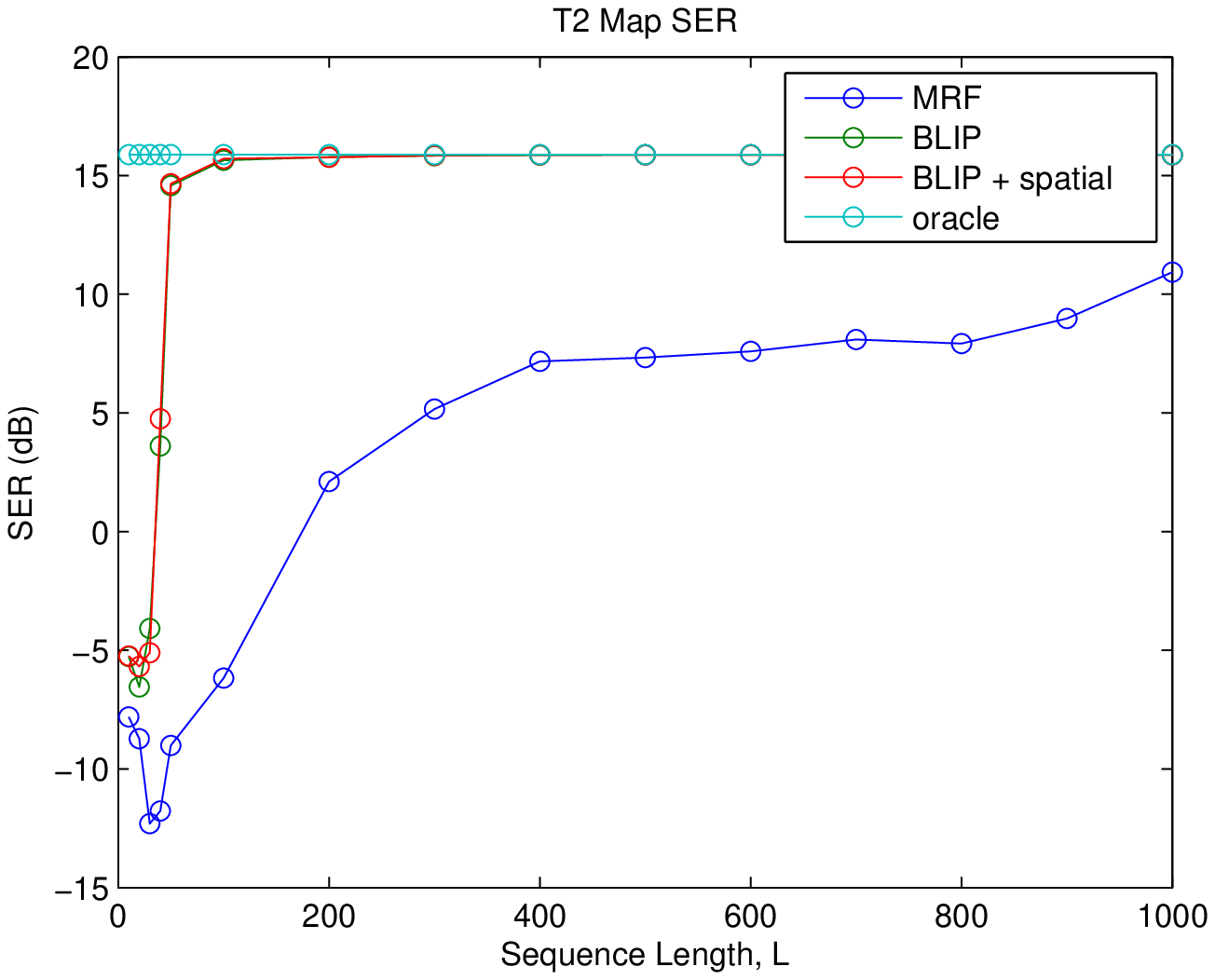}
  \begin{center}
  {(d)} \vspace{12pt}
  \end{center}  
\end{minipage}
\caption{\label{fig: SER vs sequence length}
Reconstruction performance as a function of sequence length. (a) SER for image sequence reconstruction; (b) SER for density map estimation; (c) SER for $\Tone$ map estimation; and (d) SER for $\Ttwo$ map estimation. Results are shown for the following algorithms: MRF, BLIP, BLIP with spatial regularization. Also shown is the performance of an oracle estimator given the full image sequence data. Finally subfigures (a) and (b) also include the performance of a rescaled MRF estimator.}
\end{center}
\end{figure}

\subsubsection{Visual Comparison}

To get a visual indication of the performance of the BLIP approach over the MRF reconstruction at low sequence lengths, 
images of the 3 different parameter estimates for $L = 300$ are given in figures~\ref{fig: visual comparison density}, \ref{fig: visual comparison T1} and \ref{fig: visual comparison T2}. The left hand column shows the ground truth parameter maps while the middle row shows the MRF reconstruction (scaled) and the right hand column shows the BLIP estimates (with spatial regularization). While the main aspects of the parameter maps are visible in the MRF reconstructions, there are still substantial aliasing artefacts. These are most prominent in the $\Tone$ and $\Ttwo$ estimates. In contrast, the BLIP estimates are virtually distortion-free, indicating that good spatial parameter estimates can be obtained with as little as 300 excitation pulses.

\begin{figure}[!ht]
\vspace{-1cm}
\centering
\includegraphics[height=0.99\textheight]{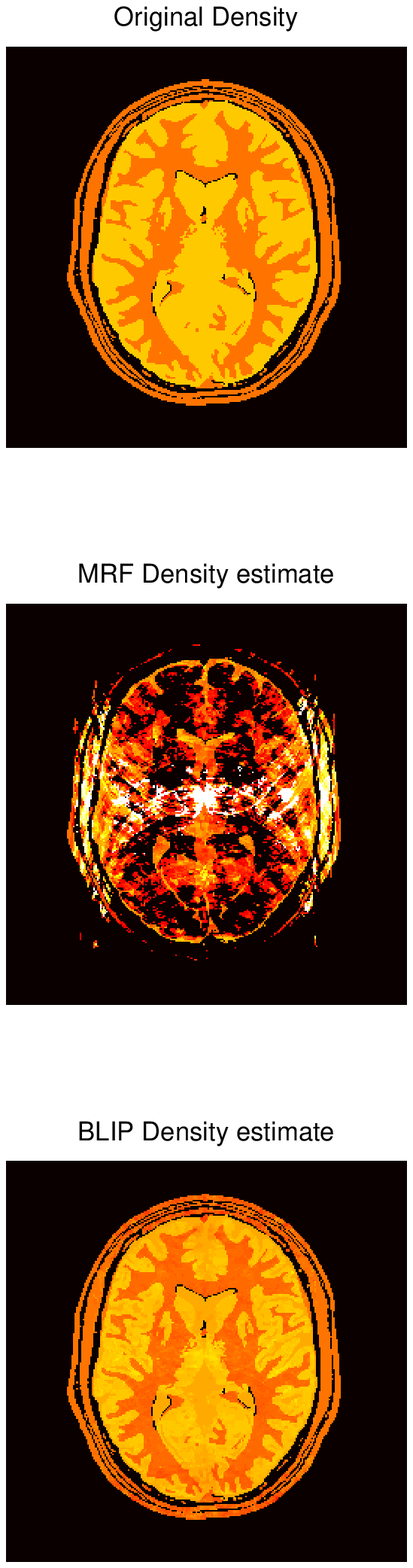}
\vspace{-1cm}
\caption{\label{fig: visual comparison density}
A visual comparison of the density map estimates from a sequence of length $L=300$. The top plot shows the original density map. The middle image is the MRF estimate and the bottom image is the BLIP estimate.}
\end{figure}

\begin{figure}[!ht]
\begin{center}
\vspace{-1cm}
\includegraphics[height=0.99\textheight]{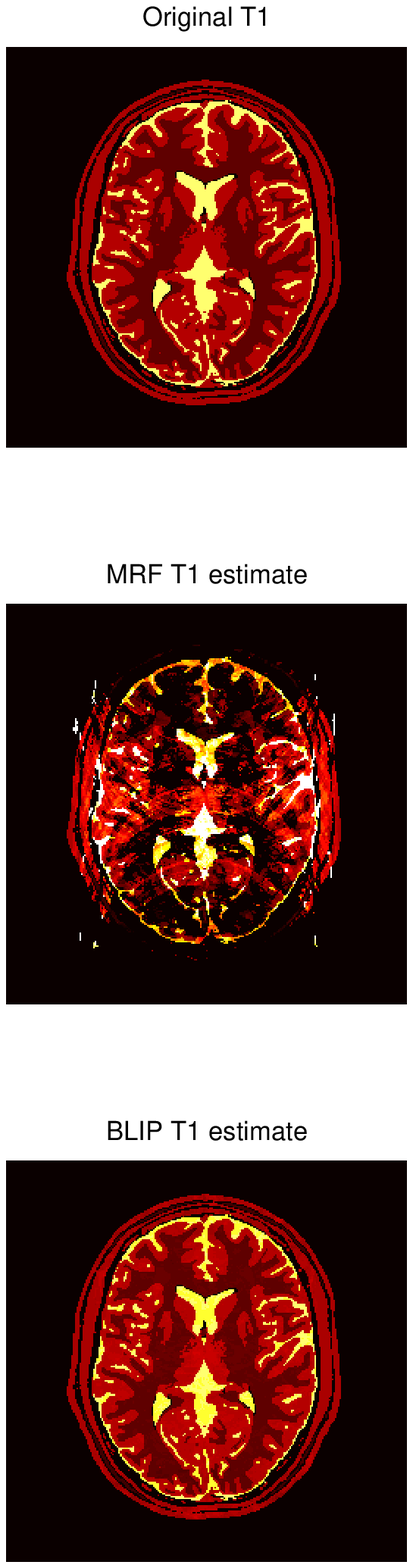}
\vspace{-1cm}
\caption{\label{fig: visual comparison T1}
A visual comparison of the $\Tone$ map estimates from a sequence of length $L=300$. The top plot shows the original $\Tone$ map. The middle image is the MRF estimate and the bottom image is the BLIP estimate.}
\end{center}
\end{figure}

\begin{figure}[!ht]
\begin{center}
\vspace{-1cm}
\includegraphics[height=0.99\textheight]{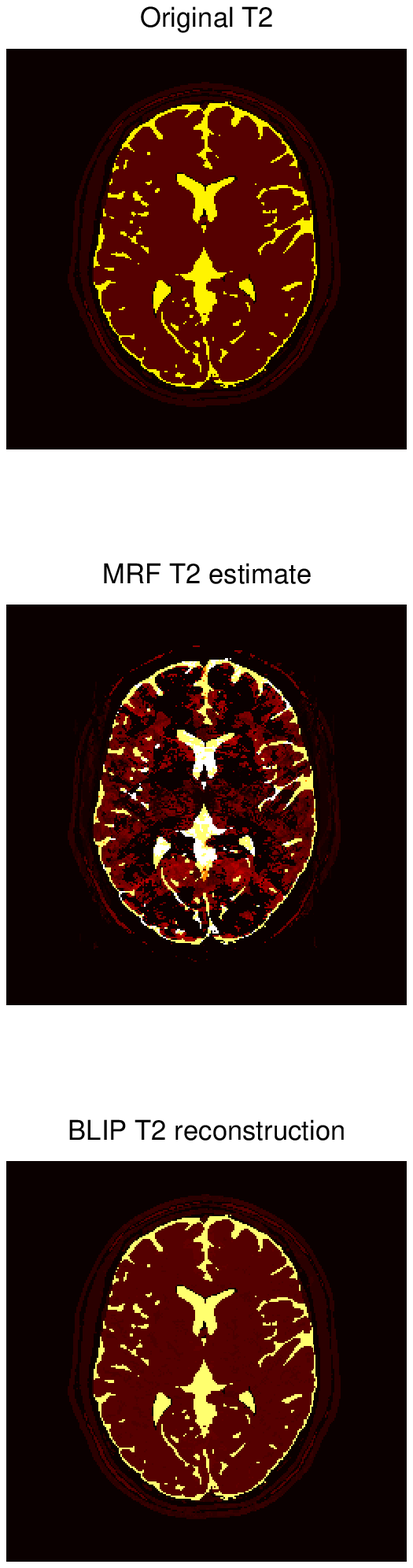}
\vspace{-1cm}
\caption{\label{fig: visual comparison T2}
A visual comparison of the $\Ttwo$ map estimates from a sequence of length $L=300$. The top plot shows the original $\Ttwo$ map. The middle image is the MRF estimate and the bottom image is the BLIP estimate.}
\end{center}
\end{figure}


\subsubsection{Convergence rates for BLIP}

The convergence of the iterative algorithms is shown in figure~\ref{fig: convergence plot} as a function of the relative data consistency error at each iteration $k$, which we define as $\|Y-h(X^{k})\|_2^2/\|Y\|_2^2$. Results for three different sequence lengths, $100$, $200$ and $500$, are shown in the figure. It is clear that in all cases the algorithms converge rapidly and for sequence lengths of $200$ or more have effectively converged within $20$ iterations (note the log scale along the y-axis). Indeed, this is predicted by the compressed sensing theory for IPA: when  the sequence length increases, so that compressed sensing task becomes easier (smaller isometry constant) the rate of convergence also increases. Thus BLIP can be considered to be reasonably computationally efficient. 

\begin{figure}[t]
\begin{center}
\includegraphics[width=0.7\linewidth]{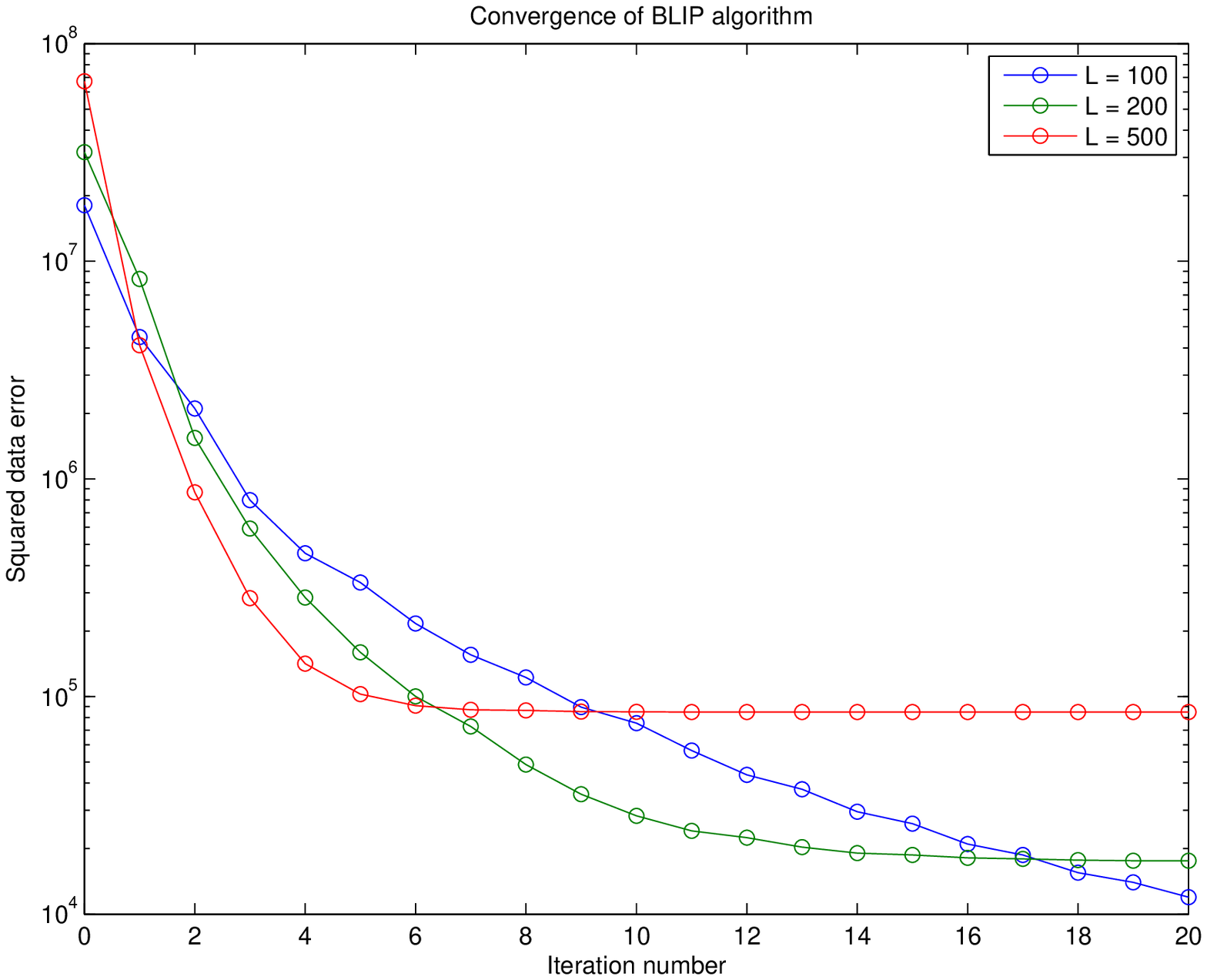}
\caption{\label{fig: convergence plot}
Plots of the data consistency error at each iteration for BLIP using a varying sequence length. The convergence rate increases as the sequence length increases. This is consistent with theory as the increased sequence length is likely to reduce the isometry constant.}
\end{center}
\end{figure}
  
\subsection{Subsampling versus sequence length}

In our next experiment we investigate the dependencies of the undersampling ratio and the sequence length on the reconstruction performance. In this experiment we evaluate the image sequence SER as a function of $L$ and $p$. Recall that the theory presented in section~\ref{sec: excitation and sampling} suggested that this performance might degrade roughly as a function of $p^2/L$. However, as we noted earlier, the analysis in that section is of a `worst case' type and may be highly conservative. Figure~\ref{fig: L vs p2} shows a plot of the image sequence SER as a function of $L/p^2$ for three different subsampling rates: $p = 16$ (green), $p = 32$ (red) and $p = 64$ (blue). From the plot we can see that the rapid growth of the SER that we associate with successful recovery occurs in each case at roughly the same value of $L/p^2$. This seems to suggest that the predicted scaling behaviour for $L$ and $p$ in random EPI to achieve RIP is of the right order. This in turn suggests that to maximize efficiency we should attempt to minimize $p$ (all other design criteria being equal).

\begin{figure}[t]
\begin{center}
\includegraphics[width=0.7\linewidth]{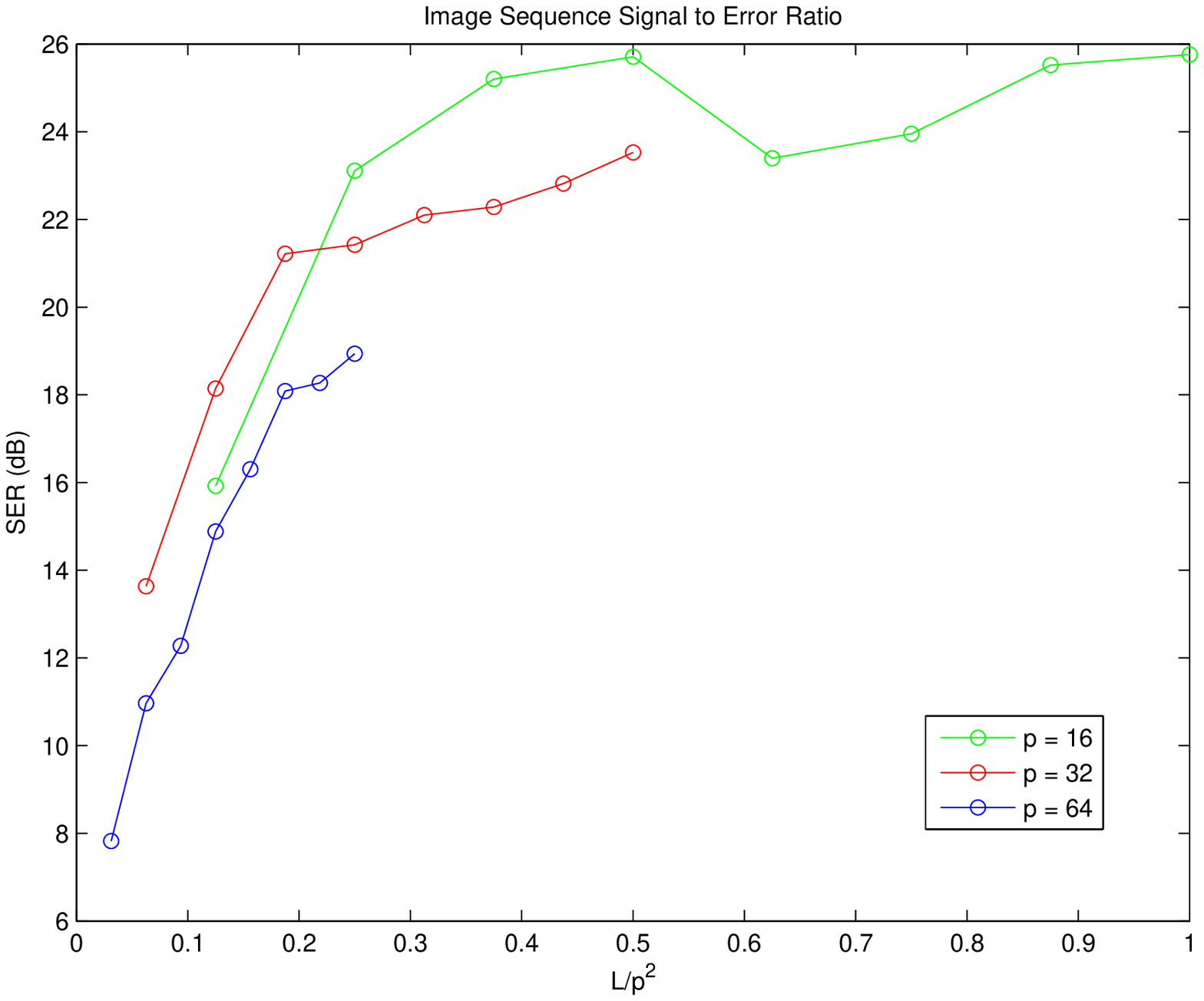}
\caption{\label{fig: L vs p2}
A plot of the Image sequence SER (dB) against $L/p^2$ for three different levels of undersampling: $p = 16$ (green), $p = 32$ (red) and $p = 64$ (blue). The rapid increase in SER appears to occur at roughly the same value of $L/p^2$ in each case suggesting that the RIP result in Theorem~\ref{th: randEPI embedding} is of the right order.}
\end{center}
\end{figure}

%

\subsection{Using a complex density model}
\label{sec: complex density experiment}
The simulations, so far, have used the somewhat idealized model that the density map is real and non-negative. In this experiment we demonstrate that the algorithm works just as well when the density map is allowed to be complex and to absorb sensitivity maps and other phase terms. Here we repeat the first experiment but we modify the density map to have a quadratic phase that is zero at the centre of the image and $pi/4$ at the corners. A plot of the phase is shown on the left hand side in figure~\ref{fig: complex_density_T2_SER}.

We then ran the MRF reconstruction algorithm and BLIP with equations \eqref{eq: bloch MF} and \eqref{eq: bloch MF proton density} replaced by \eqref{eq: bloch complex MF} and \eqref{eq: bloch MF complex proton density} in both algorithms. The resulting performance was very similar to that in the real valued case. For brevity we only show a plot of the the $\Ttwo$ SER in figure~\ref{fig: complex_density_T2_SER}. We see that the parameter estimation behaves  identically to that in the first experiment. Similar behaviour can be observed for the other parameters. Therefore, it seems that there is no significant difference in using the real or complex model for proton density.

\begin{figure}[t]
\begin{center}
\begin{minipage}{0.49\linewidth}
  \includegraphics[width=\linewidth]{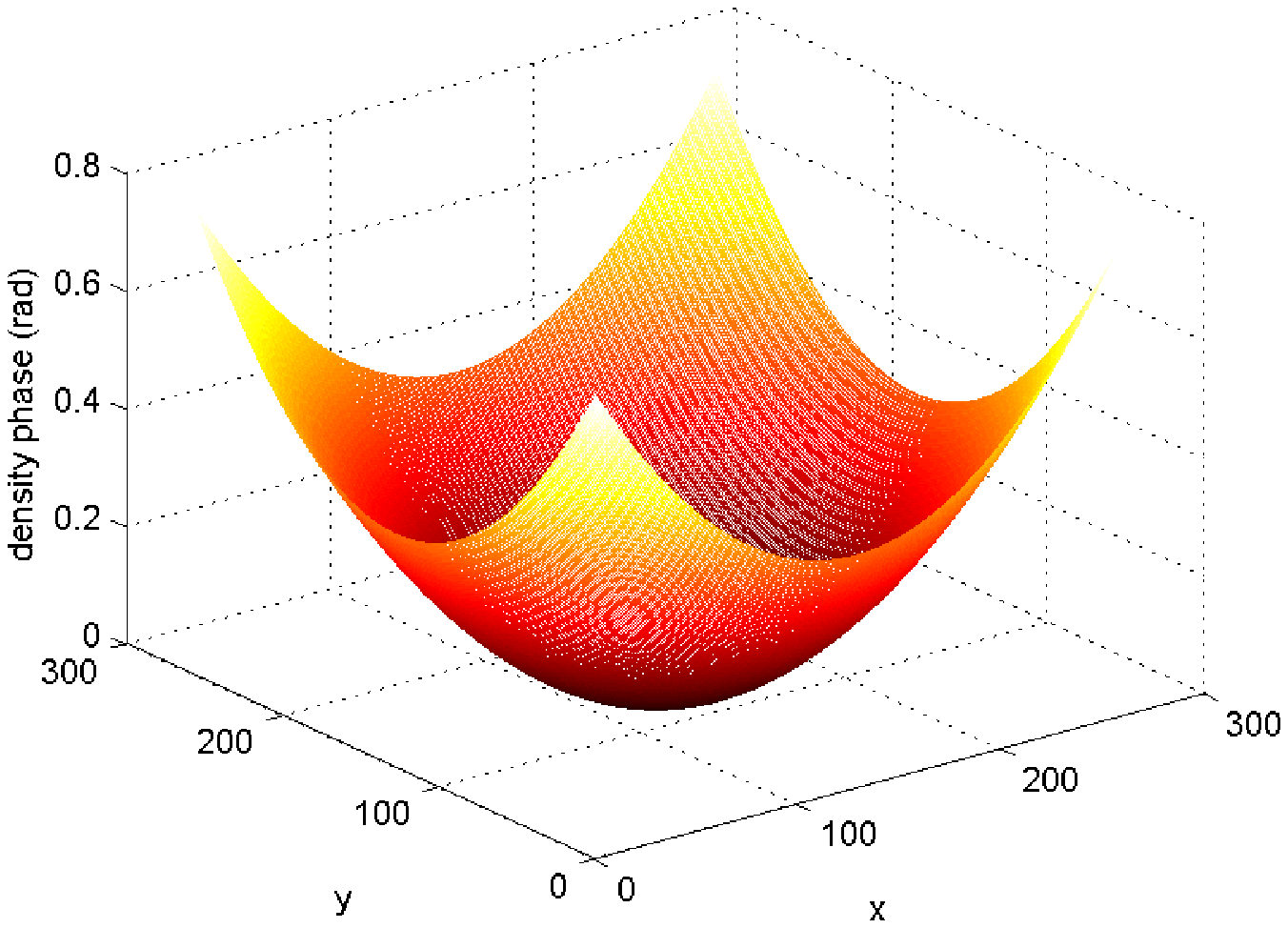}
  \begin{center}
  {(a)}
  \vspace{12pt}
  \end{center}
\end{minipage}
\begin{minipage}{0.49\linewidth}
  \includegraphics[width=\linewidth]{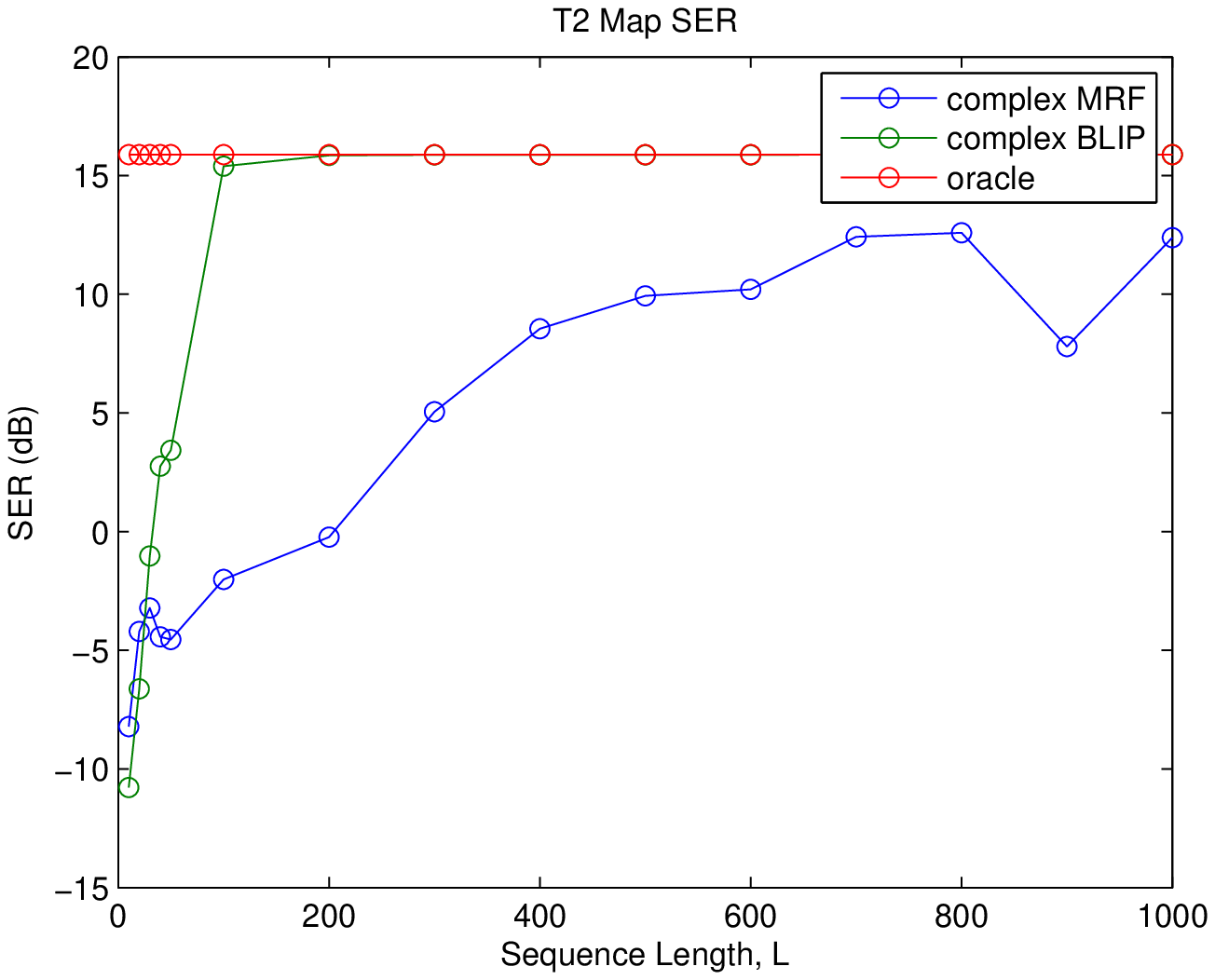}
  \begin{center}
  {(b)}\vspace{12pt}
  \end{center}
\end{minipage}
\caption{
\label{fig: complex_density_T2_SER}
Reconstruction performance for the $\Ttwo$ map using a complex density model. (a) The quadratic phase applied to the density map; (b) SER for $\Ttwo$ map estimation as a function of sequence length. Results are shown for the following algorithms: complex MRF, complex BLIP and the complex oracle estimate.
}
\end{center}
\end{figure}

\subsection{Uniform versus non-uniform sampling}

In \S\ref{sec: excitation and sampling} we asserted that as the Bloch response model does not include any spatial structure it is preferable to take uniformly random samples of $k$-space in order to achieve the RIP rather than use a variable density scheme. In this final experiment we examine the effect of replacing the (uniform) random EPI sampling with a sampling pattern that weights the lower frequencies more, as is common in compressed sensing schemes for MRI \cite{Lustig2008}. Specifically, we choose a non-uniform sampling pattern with an equivalent undersampling ratio, $M/N = 1/16$, that always samples $k_y = 0, 1, 2, \sqrt{N}-3, \sqrt{N}-2 \mbox{~and~} \sqrt{N}-1$ (the centre of $k_y$-space), and then samples the remainder of $k$-space uniformly at random (with the remaining $10$ samples). While we have not tried to optimize this non-uniform sampling strategy we have found that other variable density sampling strategies performed similarly. 

We repeated the first experiment and compared the random EPI sampling to using non-uniform sampling with the sequence length varied between $10$ and $300$. Again we focus on the $\Ttwo$ reconstruction, although similar behaviour was observed for the density and $\Tone$ estimation (not shown). The $\Ttwo$ results are plotted in figure~\ref{fig: uniform_vs_nonuniform}. It is clear from the figure that BLIP does not perform well with the non-uniform sampling of $k$-space, and it never achieves the near oracle performance that we observe with the random EPI sampling strategy. Indeed, we observed no non-uniform sampling strategy to do this. Other simulations (not shown) have indicated that uniform i.i.d. undersampling in $k_y$ also performs well, although we have yet to prove this has the RIP. 

Interestingly, the MRF reconstruction does benefit from the non-uniform sampling, however the reconstruction quality is still very poor. We believe that this can be explained by the fact that in both cases the MRF reconstructions exhibit significant aliasing. However, in the non-uniform case the aliasing is concentrated more in the high frequencies where the signal has less energy and therefore introduces less distortion. 

%
%


\begin{figure}[t]
\begin{center}
\includegraphics[width=0.7\linewidth]{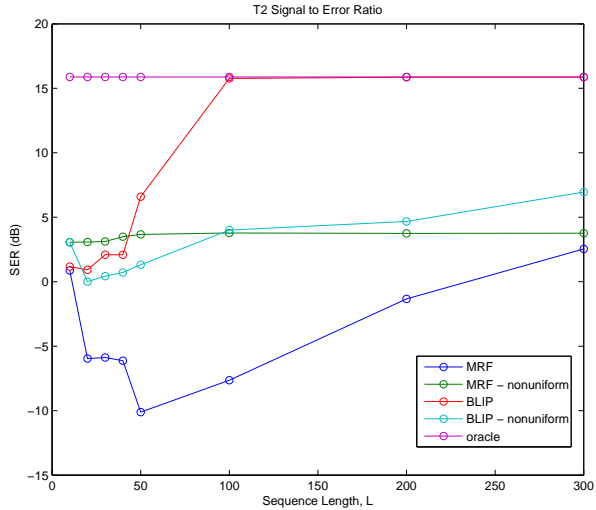}
\caption{\label{fig: uniform_vs_nonuniform}
A plot of the $\Ttwo$ estimate SER (dB) against $L$ for reconstruction algorithms MRF and BLIP using uniform random (EPI) sampling with $p = 16$ and a non-uniform random sampling with an equivalent undersampling ratio $M/N = 16$. Only in the case of BLIP with uniform random sampling does the $\Ttwo$ estimate performance approach that of the oracle estimator.}
\end{center}
\end{figure}

\section{Conclusions and open questions}

We have presented a principled mathematical framework for compressed quantitative MRI based around the recently proposed technique of Magnetic Resonance Fingerprinting \cite{Ma-MRF2013}. The sensing process can be considered in two separate stages. First, the embedding of the parameter information into the magnetization response sequences through the mapping $f(\cdot)$. Second, the compressive imaging of the induced magnetization image sequence. The key elements of our approach have been: the characterization of the signal model through the Bloch response manifold; the identification of a provably good image sequence reconstruction algorithm based on iterative projection; an excitation response condition based on a newly introduced measure of \emph{flatness} to quantify the persistence of the excitation; and a random EPI $k$-space sampling scheme that can be shown to have the necessary RIP condition when the excitation is suitably flat. 


The simulations presented in \S\ref{sec: experiments} show that the proposed technique is capable of achieving good parameter map reconstruction with very short pulse sequences. 
The next step will be to make a thorough comparison on an MRI scanner with MRF and other existing quantitative MRI techniques such as \cite{Deoni-DESPOT-2003}.
 
While the current work is specifically targeted at a compressed sensing framework for MRF, we believe that many elements of it should be more broadly applicable. Specifically, the RIP condition for randomized EPI may well have applications in other MR imaging strategies and the characterization of  excitation response in terms of flatness could prove a useful tool for the analysis of other compressed sensing schemes involving some form of active sensing. 

Finally, the use of parametric physical models (through appropriate discretisation) could be applicable to many areas of compressed sensing beyond MRI. The experience we have gained here suggests that such models can be more powerful than traditional spatial image models, such as wavelet sparsity, that are often found in compressive imaging.

\subsection{Open Questions}
In setting out this compressed sensing framework a number of questions have arisen that we feel should be addressed. We conclude by briefly describing these below.

\subsubsection*{Excitation sequences}

What are the key requirements for the excitation sequences? We have introduced the flatness condition, however, we have so far not exploited randomness in the excitation. This raises the question: does the excitation sequence need to be random? Although randomness seems a natural way to obtain flat responses, it is not clear that it is necessary or even preferable. Random excitations may also be able to provide less stringent sampling conditions in order to provide the RIP. Furthermore, whether deterministic or random, how should we optimize the excitation sequences in order to maximise the performance of the parameter map estimation? This seems to be very much a system identification problem.

\subsubsection*{Improved signal models}

A key question for the Bloch response model is: how densely do we need to sample $\M$? This
  will depend on the response mapping $f$, the undersampling operator $h$ and the performance of the recovery algorithm. It would be interesting to try to quantify these errors using the existing union of subspace compressed sensing theory \cite{BD-UoS-09,TB-2011}.
  
  A second question is: how should we best include additional modelling information? It is clearly desirable to include spatial regularization. However, we have seen in \S\ref{sec: experiments} that the inclusion of our limited spatial regularization within the signal model did not significantly improve performance. On the other hand, this only regularized the density map, whereas, ideally we would like to impose spatial regularity on each of the parameter maps. Unfortunately, a naive construction of such a model would lead to a complex non-separable representation that we cannot easily project onto. Alternatively, we might try to impose block spatial regularity on the image sequence on top of the Bloch response model. This form of spatial regularization was used in \cite{Doneva2010} and appears to have only provided modest performance improvements.
Therefore the question is how to best combine these models to maximize reconstruction performance and can we back this up theoretically?
 
The current signal model is also somewhat idealised.
We have treated the proton density values, $\rho_i$, as nonnegative, following the physics. However, in MRI it is more common to treat $\rho_i$ as a complex value, absorbing various phase factors into the quantity. While our framework easily extends to the complex case as highlighted in \S\ref{sec: Bloch projection} and evaluated in \S\ref{sec: complex density experiment}, it would be interesting to see whether there was a more principled way to deal with such additional phase factors.

Another idealization that is made both here and in the original MRF is that the read out time is assumed negligible with respect to the relaxation times. Depending on the level of undersampling this may not be true. This might introduce significant artefacts. If so, can we modify the signal model to account for this? 

Finally, our model does not account for partial volume effects. These were briefly touched on in the supplementary material of \cite{Ma-MRF2013}, where it was proposed to model individual voxels as a composition of different material components. Such a model is reminiscent of the spatial abundance maps used in hyperspectral imaging. In such a case we are in the realms of compressive source separation \cite{Golbabaee-2013}. Can we formulate a compressive MRF problem that accounts for partial volume effects in a similar manner?
  
\subsubsection*{Subsampling $k$-space}

We have identified certain conditions that guarantee the RIP for random EPI sampling. This allows us to trade off the $k$-space subsampling factor $p=N/M$ with the length of the excitation sequence, $L$. Unfortunately the trade off scales as $L \sim p^2$. It is not clear whether similar guarantees could be achieved from a deterministic sampling sequence or whether this is indeed optimal. It would be more desirable to have a proportional trade off $L \sim p$. Is such a scaling possible? If so, what is the appropriate combination of excitation sequence and sampling strategy? 

Finally, if we can successfully incorporate spatial structure into our signal model, as suggested above, it is very likely that a variable density sampling would be preferable. If so, can we leverage existing theory for variable density sampling \cite{Adcock-2011,Puy2011} to develop principled designs for variable density sampling for compressive MRF?

\appendix
\section{Dynamics of balanced SSFP sequences}
\label{app: bloch dynamics}
Balanced SSFP sequences are popular in MRI and were the basis of the excitation sequences used in MRF \cite{Ma-MRF2013}, although the term `steady state' is somewhat of a misnomer as this refers to the steady state conditions arrived at following periodic excitation with constant $\alpha$ and $\TR$ \cite{Scheffler2003}.

In fact, here we are explicitly interested in the transient dynamics of a non-periodic excitation sequence. This is in contrast with traditional SSFP sequences where transient oscillations are seen as undesirable as they can introduce imaging artefacts \cite{Hargreaves2001}. In this work, as in \cite{Ma-MRF2013}, we will regard the transient behaviour as essential in enabling us to distinguish between different quantitative behaviour.

The transient response can be formally described in terms of a $3$-dimensional linear discrete time dynamical system that we summarize below, see \cite{Hargreaves2001,Ganter,Scheffler2003} for further details. To keep things simple we will assume there is no phase increment between pulses and also that the $l$th echo time, $\TE_l$, is half the $l$th repetition time $\TR_l$.

Following \cite{Hargreaves2001}, let $\m_{l} = (m_{l}^{x},m_{l}^{y},m_{l}^{z})^T \in \R^3$ represent the 3-dimensional magnetization vector for a voxel at the $l$th excitation pulse. In Inversion Recovery SSFP sequences the equilibrium magnetization, $\m_{\mbox{eq}} = [0,0,1]^T$, is initially inverted so that $\m_0 = [0,0,-1]^T$. Then the magnetization after the $l$th RF-excitation is given by the following linear discrete time dynamical system:
\begin{equation}
\label{eq: magnetization response}
\m_{l+1} = R_x(\alpha_l) R_z(\phi_{l}) E_l  \m_{l} + R_x(\alpha_l)(\Id -E_l) \m_{\mbox{eq}}
\end{equation}
where $R_u(\phi)$ denotes a rotation about the $u \in \{x,y,z\}$ axis by an angle $\phi$, $\phi_l = 2\pi \delta f \TR_l$ is the off-resonance phase associated with local field variations and chemical shift effects \cite{Ganter} and $E_l$ is the diagonal matrix characterizing the relaxation process:
\begin{equation}
E_l \defeq 
\begin{pmatrix}
e^{-\TR_l/\Ttwo} &~&~\\
~& e^{-\TR_l/\Ttwo}& ~\\
~&~&e^{-\TR_l/\Tone}
\end{pmatrix}
\end{equation}
where the $\Tone$ relaxation time controls the rate of relaxation along the $z$-axis, while the $\Ttwo$ relaxation time controls the relaxation onto the $z$-axis. 

Finally let $\hat{\m}_{l}$ denote the magnetization at the echo time, $\TE_l$. Then this is given by \cite{Hargreaves2001}:
\begin{equation}\label{eq: TE response}
\hat{\m}_{l} = R_z(\phi_{l}/2) E_l^{1/2}  \m_{l} + (\Id -E_l^{1/2}) \m_{\mbox{eq}},
\end{equation}
with the readout coil measuring $\hat{m}_{l}^{x}+ j \hat{m}_{l}^{y}$. Thus the magnetization dynamics in response to a sequence of RF pulses with flip angles, $\alpha_l$, and repetition times, $\TR_l$, is given by \eqref{eq: magnetization response} and \eqref{eq: TE response} which apart from the input parameters is solely a function of the tissue parameters $\Tone$, $\Ttwo$, and the off-resonance frequency, $\delta f$.

\section{Proof of Theorem \ref{th: randEPI embedding}}
\label{app: RIP proof}
We first introduce the key lemmas that form the main ingredients of the proof. Our approach will
follow the standard route of concentration of measure, $\epsilon$-net and union bound. To this end we will need the following well known Chernoff bound \cite{Dubhashi2009}:

\begin{Lemma}\label{th: chernoff}
Let $X = X_1+X_2+\ldots +X_n$, $0 \leq X_i \leq 1$ with $\mu = \Expect (X)$. Then
\begin{equation}
\Prob(|X-\mu| > \epsilon \mu) \leq 2 \exp \left( -\frac{\epsilon^2 \mu}{3} \right)
\end{equation}
\end{Lemma}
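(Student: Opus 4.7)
The plan is to apply the classical Cram\'er--Chernoff method via the exponential moment generating function, handling the upper and lower tails separately and concluding by a union bound (which accounts for the factor of $2$). Throughout, independence of the $X_i$'s is assumed (this is implicit in the cited reference; without it the conclusion is false).

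For the upper tail, I would start from Markov's inequality applied to $e^{tX}$ with $t>0$:
\begin{equation*}
\Prob(X \ge (1+\epsilon)\mu) \;=\; \Prob\bigl(e^{tX} \ge e^{t(1+\epsilon)\mu}\bigr) \;\le\; e^{-t(1+\epsilon)\mu}\,\Expect[e^{tX}].
\end{equation*}
Independence factorizes the moment generating function, and since $0\le X_i \le 1$ the convexity of $x\mapsto e^{tx}$ on $[0,1]$ gives $e^{tX_i} \le 1 - X_i + X_i e^t$, so with $p_i = \Expect[X_i]$
\begin{equation*}
\Expect[e^{tX_i}] \;\le\; 1 + p_i(e^t-1) \;\le\; \exp\bigl(p_i(e^t-1)\bigr),
\end{equation*}
where the last step uses $1+x \le e^x$. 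Multiplying over $i$ and using $\sum_i p_i = \mu$ yields $\Expect[e^{tX}] \le \exp(\mu(e^t-1))$, hence
\begin{equation*}
\Prob(X \ge (1+\epsilon)\mu) \;\le\; \exp\bigl(\mu(e^t-1) - t(1+\epsilon)\mu\bigr).
\end{equation*}
Optimizing the right-hand side in $t$ gives the minimizer $t^\star = \log(1+\epsilon)$, and substituting produces the standard relative-entropy bound
\begin{equation*}
\Prob(X \ge (1+\epsilon)\mu) \;\le\; \exp\bigl(-\mu\,\psi_+(\epsilon)\bigr), \qquad \psi_+(\epsilon) := (1+\epsilon)\log(1+\epsilon) - \epsilon.
\end{equation*}

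For the lower tail I would repeat the argument with $t<0$ (equivalently, bound $\Prob(-X \ge -(1-\epsilon)\mu)$), obtaining
\begin{equation*}
\Prob(X \le (1-\epsilon)\mu) \;\le\; \exp\bigl(-\mu\,\psi_-(\epsilon)\bigr), \qquad \psi_-(\epsilon) := (1-\epsilon)\log(1-\epsilon) + \epsilon.
\end{equation*}
Combining the two tails via a union bound yields the factor of $2$ in front of the exponential.

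The remaining work, and the only non-mechanical step, is the elementary inequality
\begin{equation*}
\min\bigl(\psi_+(\epsilon),\,\psi_-(\epsilon)\bigr) \;\ge\; \frac{\epsilon^2}{3} \qquad \text{for } 0 < \epsilon \le 1,
\end{equation*}
which is what allows us to replace the implicit relative-entropy rate function by the explicit sub-Gaussian-type rate $\epsilon^2/3$ in the statement. I would verify this by setting $g_\pm(\epsilon) := \psi_\pm(\epsilon) - \epsilon^2/3$, noting $g_\pm(0) = g_\pm'(0) = 0$, and checking that $g_\pm''(\epsilon) \ge 0$ on $[0,1]$: indeed $\psi_+''(\epsilon) = 1/(1+\epsilon)$ and $\psi_-''(\epsilon) = 1/(1-\epsilon)$, and one verifies directly that $1/(1+\epsilon) \ge 2/3$ for $\epsilon \le 1/2$ plus a slightly sharper argument on $[1/2,1]$ (or one simply uses the stronger bound $\psi_+(\epsilon) \ge \epsilon^2/(2+2\epsilon/3)$ from the series expansion of $(1+\epsilon)\log(1+\epsilon)$). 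This calculus step is the main (albeit modest) obstacle: everything else is formal, and the constant $3$ in the denominator is determined precisely by how tight a quadratic lower bound on $\psi_\pm$ one is willing to prove on the range $\epsilon \in (0,1]$.
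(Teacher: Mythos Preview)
Your argument is the standard Cram\'er--Chernoff derivation and is correct (with the caveat you yourself note, that independence must be assumed). The paper, however, does not prove this lemma at all: it is simply quoted as a well-known result with a citation to \cite{Dubhashi2009}. So there is nothing to compare against; your write-up supplies a proof where the paper supplies only a reference.
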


The next lemma establishes a near isometry for a single aliased voxel sequence.

\begin{Lemma}\label{th: single voxel chord isometry}
Let $z \in \C^L$ be a random vector given by:
\begin{equation}
z_i = \frac{1}{p}\sum_k U_{k,i} e^{-j 2 \pi \zeta_i k / p} 
\end{equation}
where $\zeta_i$ are independent random variables drawn uniformly from $\{0, \ldots, p-1\}$ and $U \in \C^{p \times L}$ is a matrix whose rows have flatness $\lambda$. Then, with probability at least $1-2 e^{-\epsilon^2 /(3 p \lambda^2)}$, $z$ satisfies:
\begin{equation}
(1-\epsilon) \|U\|_F^2 \leq p^2 \|z\|_2^2 \leq (1+\epsilon) \|U\|_F^2
\end{equation}
\end{Lemma}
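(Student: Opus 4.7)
The plan is to recognise $p^2 \|z\|_2^2$ as a sum of independent nonnegative random variables (one per coordinate $i$), each of which admits a deterministic upper bound in terms of the total Frobenius norm $\|U\|_F^2$. This is exactly the configuration required to invoke Lemma~\ref{th: chernoff} directly, with $\|U\|_F^2$ playing the role of the natural scale.

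First I would compute the mean. Expanding the squared modulus,
\begin{equation}
p^2 |z_i|^2 = \sum_{k,k'} U_{k,i}\,\overline{U_{k',i}}\, e^{-j2\pi \zeta_i (k-k')/p},
\end{equation}
and since $\zeta_i$ is uniform on $\{0,\ldots,p-1\}$ the expectation of the exponential is $\delta_{k,k'}$ (the residues $k-k'$ range over $\{-(p-1),\ldots,p-1\}$, and only the zero residue contributes). This gives $\Expect[p^2|z_i|^2] = \sum_k |U_{k,i}|^2$; summing over $i$ and using the independence of the $\zeta_i$ yields $\Expect[p^2\|z\|_2^2] = \|U\|_F^2$, which identifies the correct target.

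Next I would produce the deterministic uniform upper bound on each term. Cauchy--Schwarz applied to the sequences $(U_{k,i})_k$ and $(e^{-j2\pi \zeta_i k/p})_k$ gives
\begin{equation}
p^2 |z_i|^2 \;\leq\; p \sum_k |U_{k,i}|^2,
\end{equation}
and the hypothesis that every row of $U$ has flatness $\lambda$ gives $|U_{k,i}|^2 \leq \lambda^2 \|U_{k,:}\|_2^2$ for each $k$. Summing in $k$ then produces the key bound
\begin{equation}
p^2 |z_i|^2 \;\leq\; p\,\lambda^2 \|U\|_F^2.
\end{equation}
I would therefore define $X_i := p^2|z_i|^2 / (p\lambda^2 \|U\|_F^2) \in [0,1]$; these are independent (the $\zeta_i$ are independent) and their sum $X = \sum_i X_i$ has mean $\mu = 1/(p\lambda^2)$. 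Lemma~\ref{th: chernoff} then gives $\Prob(|X-\mu|>\epsilon\mu) \leq 2\exp(-\epsilon^2/(3p\lambda^2))$, and multiplying through by the common factor $p\lambda^2\|U\|_F^2$ recovers $(1-\epsilon)\|U\|_F^2 \leq p^2\|z\|_2^2 \leq (1+\epsilon)\|U\|_F^2$ on the desired event.

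The only subtlety is the choice of normalising scale in the last step: the row-wise flatness hypothesis delivers a \emph{global} bound proportional to $\|U\|_F^2$ rather than a column-wise bound proportional to $\|U_{:,i}\|_2^2$, and it is this common scale that allows the Chernoff parameter $\mu$ to factor out uniformly in $i$. Everything else (the mean computation, the Cauchy--Schwarz step, and the application of Lemma~\ref{th: chernoff}) is routine, so I expect no other obstacles in this single-voxel statement; the real work will come in the subsequent union-bound argument over an $\epsilon$-net of chords of $(\R_+\B)^N - (\R_+\B)^N$ that leads to Theorem~\ref{th: randEPI embedding}.
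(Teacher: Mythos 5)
Your proposal is correct and follows essentially the same route as the paper: compute the mean $\Expect[p^2\|z\|_2^2]=\|U\|_F^2$ via orthogonality of the discrete complex exponentials (the paper phrases this through unitarity of the $p\times p$ DFT matrix, which is the same fact), bound each summand deterministically by $p\lambda^2\|U\|_F^2$ using Cauchy--Schwarz plus the row-flatness hypothesis, and apply the Chernoff bound of Lemma~\ref{th: chernoff} after rescaling by that common global factor. The "subtlety" you flag about normalising by the global $\|U\|_F^2$ rather than a column-wise quantity is exactly the rescaling the paper performs, so there is nothing to add.
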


\begin{proof}
We first show that $\Expect \|z\|_2^2 = \frac{1}{p^2}\|U\|_F^2$ and then derive the necessary tail bounds.

Let $W_{a,k} = \frac{1}{\sqrt{p}} e^{-j 2 \pi a k/p}$, $a,k = 0, \ldots, p-1$, denote the unitary Discrete Fourier transform in $\C^p$. We can then write
\begin{align}
\Expect \| z\|_2^2 &= \sum_{a=0}^{p-1} \frac{1}{p} \left( \sum_{i=1}^L \frac{1}{p} | W_{a,:} U_{:,i}|^2 \right)\\
& = \frac{1}{p^2} \sum_i \sum_a | W_{a,:} U_{:,i}|^2\\
&= \frac{1}{p^2} \sum_i \|U_{:,i}\|_2^2\\
&= \frac{1}{p^2} \|U\|_F^2,
\end{align}

Now note that $\|z\|_2^2$ is the sum of $L$ independent random variables, $\|z\|_2^2 = \sum_i \xi_i$ with $\xi_i = \frac{1}{p} |W_{\zeta_i,:} U_{:,i}|^2$. Furthermore the $\xi_i$ satisfy:
\begin{equation}
\begin{split}
0 \leq \xi_i & \leq \frac{1}{p} \|U_{:,i}\|_2^2\\
& \leq \frac{1}{p} \sum_k \max_i |U_{k,i}|^2\\
& \leq \frac{1}{p} \sum_k \lambda^{2} \|U_{k,:}\|_2^2\\
& = \frac{\lambda^2}{p} \|U\|_F^2
\end{split}
\end{equation}

We can therefore apply the Chernoff bound from Lemma~\ref{th: chernoff} to $\sum \xi_i$ rescaled by $\frac{\lambda^2}{p} \|U\|_F^2$ to give:
\begin{equation}
\Prob(|\|z\|_2^2 - \frac{1}{p^2}\|U\|_F^2| > \epsilon \frac{1}{p^2} \|U\|_F^2) \leq 2 \exp \left( - \frac{\epsilon^2}{3 p \lambda^2}\right)
\end{equation}
Rearranging this expression completes the proof.
\qquad \end{proof}

Next we extend Lemma~\ref{th: single voxel chord isometry} to a near isometry for groups of aliased voxels under the action of $h$. Since $h$ is an ortho-projector, $\|h (X)\|_2^2 = \|h^H h(X) \|_2^2$ and so we can equivalently consider the isometry properties of $h^Hh$. 

Let us denote $Z = h^H(h(X))$ such that $Z_{:,l} = F^H P(\zeta_l)^T Y_{:,l}$. Recall that $h$ is a partially sampled 2D discrete Fourier transform that is fully sampled in the $k_x$ direction and periodically subsampled by a factor of $p = N/M$ in the $k_y$ direction.  Therefore each $Z_{i,l}$ is the sum of $p$ aliases taken from $X_{:,l}$:
\begin{equation}
Z_{i,l} = \frac{1}{p} \sum_{k=0}^{p-1} X_{\tau_i(k),l}~ e^{-j 2 \pi \zeta_l k / p}
\end{equation}
where $\tau_i(k)$ gives the index of the $k$th alias for the $i$th voxel (with $\tau_i(0) = i$). We can therefore partition the set $\{1,\ldots,N\}$ into $M$ disjoint index sets $\Lambda_1, \ldots, \Lambda_M$ with each set associated with $p$ aliases, such that $h^Hh$ is separable over $\{\Lambda_i\}$ and $Z_{\Lambda_i,:} = [h^Hh]_{\Lambda_i} X_{\Lambda_i,:}$. Since each $Z_{\Lambda_i,:}$ contains $p$ copies of the same combination of aliases (up to a phase shift) we can conclude that:
\begin{equation}
\|Z_{\Lambda_i,:}\|_F^2 = p \|Z_{k,:}\|_2^2, ~\forall k \in \Lambda_i
\end{equation}

Applying Lemma~\ref{th: single voxel chord isometry} then gives us:
\begin{Lemma}\label{th: aliased voxel isometry}
Let $Z_{\Lambda_i,:} = [h^Hh]_{\Lambda_i} X_{\Lambda_i,:}$ for some $X_{\Lambda_i,:} \in \C^{p\times L}$ whose rows have a flatness $\lambda$ where $[h^Hh]_{\Lambda_i}$ is defined above. Then with probability at least $1-2 e^{-\epsilon^2 /(3 p \lambda^2)}$ we have 
\begin{equation}\label{eq: single vector isometry}
(1-\epsilon) \|X_{\Lambda_i,:}\|_F^2 \leq p \|Z_{\Lambda_i,:}\|_F^2 \leq (1+\epsilon) \|X_{\Lambda_i,:}\|_F^2
\end{equation}
\end{Lemma}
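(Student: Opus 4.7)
The plan is to reduce the statement directly to the single-voxel Lemma~\ref{th: single voxel chord isometry} by exploiting the aliasing structure of $h^Hh$ that was already spelled out in the paragraph above the lemma. The key observation is that for any fixed $k \in \Lambda_i$, the row $Z_{k,:}$ of the aliased signal has \emph{exactly} the form assumed for the random vector $z$ in Lemma~\ref{th: single voxel chord isometry}. Indeed, from the formula $Z_{k,l} = \tfrac{1}{p}\sum_{m=0}^{p-1} X_{\tau_k(m),l}\, e^{-j 2\pi \zeta_l m/p}$, if we set $U_{m,l} := X_{\tau_k(m),l}$ and reinterpret the time index $l \in \{1,\dots,L\}$ as the index that was called $i$ in Lemma~\ref{th: single voxel chord isometry}, then $Z_{k,l}$ is precisely $z_l$ of that lemma, and $\{\zeta_l\}_{l=1}^L$ are i.i.d.\ uniform on $\{0,\dots,p-1\}$ as required. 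Crucially, as $m$ ranges over $\{0,\dots,p-1\}$ the index $\tau_k(m)$ sweeps out exactly the set $\Lambda_i$, so $U$ is just $X_{\Lambda_i,:}$ (up to row ordering) and therefore inherits the flatness $\lambda$ of its rows and satisfies $\|U\|_F = \|X_{\Lambda_i,:}\|_F$.

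Next I would invoke Lemma~\ref{th: single voxel chord isometry} to conclude that, with probability at least $1 - 2\exp(-\epsilon^2/(3p\lambda^2))$,
\begin{equation}
(1-\epsilon)\,\|X_{\Lambda_i,:}\|_F^2 \;\le\; p^2\,\|Z_{k,:}\|_2^2 \;\le\; (1+\epsilon)\,\|X_{\Lambda_i,:}\|_F^2.
\end{equation}
The final step is to translate this statement about a single aliased row back to a statement about $\|Z_{\Lambda_i,:}\|_F^2$. For this I would use the identity noted in the text just above the lemma, namely $\|Z_{\Lambda_i,:}\|_F^2 = p\,\|Z_{k,:}\|_2^2$ for every $k\in\Lambda_i$, which holds because all $p$ rows indexed by $\Lambda_i$ are equal up to unit-modulus phase shifts coming from the DFT aliasing kernel. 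Substituting $p^2\|Z_{k,:}\|_2^2 = p\,\|Z_{\Lambda_i,:}\|_F^2$ into the two-sided bound above yields \eqref{eq: single vector isometry} with exactly the claimed probability.

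There is essentially no obstacle in this argument beyond bookkeeping: the work has already been done in Lemma~\ref{th: single voxel chord isometry}, and the aliasing identity is a direct consequence of the structure of the random EPI operator on a single $\Lambda_i$-block. The only thing one has to be careful about is that the randomness used is shared across rows in $\Lambda_i$ (the same $\zeta_l$ drives all $p$ aliased rows), which is precisely why the identity $\|Z_{\Lambda_i,:}\|_F^2 = p\|Z_{k,:}\|_2^2$ holds and also why the probabilistic statement for one representative $k$ immediately propagates to the full block without any union bound over $\Lambda_i$. The union bound over the $M$ blocks, and ultimately over a net for the signal model, is deferred to the next stage of the proof of Theorem~\ref{th: randEPI embedding}.
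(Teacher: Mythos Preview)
Your proposal is correct and matches the paper's approach exactly: the paper's entire proof of this lemma is the single phrase ``Applying Lemma~\ref{th: single voxel chord isometry} then gives us,'' and you have simply spelled out the bookkeeping (identifying $U$ with $X_{\Lambda_i,:}$ and using the identity $\|Z_{\Lambda_i,:}\|_F^2 = p\|Z_{k,:}\|_2^2$) that the paper leaves implicit.
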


The final ingredient guarantees a near isometry for low dimensional subsets of the unit sphere (for a more sophisticated but slightly different result in this direction see \cite{Clarkson-manifolds-2008})
\begin{Lemma}\label{th: boxdim isometry}
Let $S \subset \mathbb{S}^{n-1}$ have box counting dimension $d$ such that for any $\epsilon > 0$ there exists an $\epsilon$-cover of $S$ of size $C_S \epsilon^{-d}$. Let $P:\C^n\rightarrow \C^k$ be a random projection such that for any $\delta>0$ and a fixed $x \in S$,
\begin{equation} \label{eq: single point isometry}
1-\delta \leq {\frac{n}{k}}\|Px\|_2^2 \leq 1+\delta
\end{equation}
holds with probability at least $1-c_0 e^{-c_1 \delta^2}$. Then $P$ satisfies \eqref{eq: single point isometry} for all $x \in S$ with probability at least $1-\eta$ as long as:
\begin{equation}
c_1 \geq 72 \delta^{-2} \left( d \log (36n/\delta k) + \log C_S c_0/\eta \right)
\end{equation}
\end{Lemma}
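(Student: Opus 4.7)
I would prove this with a standard $\epsilon$-net, union-bound, continuity chain, analogous to the subspace RIP proofs of Baraniuk--Davenport--DeVore--Wakin but adapted so the "model" is a set of bounded box-counting dimension rather than a linear subspace.

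First, fix a resolution $\epsilon>0$ (to be optimized at the end), and let $N\subset S$ be an $\epsilon$-cover with $|N|\leq C_S\epsilon^{-d}$, guaranteed by the box-counting hypothesis. Apply the single-point concentration with distortion $\delta/2$ to each element $x_0\in N$ and union-bound:
\begin{equation}
\Prob\Bigl[\exists x_0\in N:\ \bigl|(n/k)\|Px_0\|_2^2-1\bigr|>\delta/2\Bigr]\leq C_S\epsilon^{-d}\,c_0\exp\!\bigl(-c_1\delta^2/4\bigr).
\end{equation}
Condition on the complement, so that the $\delta/2$-isometry holds simultaneously on all of $N$.

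Next, extend from the net to arbitrary $x\in S$ by a continuity argument. Pick $x_0\in N$ with $\|x-x_0\|_2\leq\epsilon$ and expand
\begin{equation}
(n/k)\|Px\|_2^2-(n/k)\|Px_0\|_2^2 = (n/k)\bigl(2\,\Re\langle Px_0,P(x-x_0)\rangle+\|P(x-x_0)\|_2^2\bigr).
\end{equation}
Because $P$ is a coordinate-selector on an orthonormal Fourier basis in the intended application, $\|P\|\leq 1$, so $\|Px_0\|_2\leq 1$ and $\|P(x-x_0)\|_2\leq\epsilon$. The residual is therefore at most $(n/k)(2\epsilon+\epsilon^2)\leq 3(n/k)\epsilon$; choosing $\epsilon$ of order $\delta k/n$ forces this to be at most $\delta/2$, so the $\delta/2$-isometry on $N$ upgrades to a $\delta$-isometry on all of $S$.

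Finally, impose that the union-bound failure probability be at most $\eta$, i.e.\ $C_S\epsilon^{-d}c_0\exp(-c_1\delta^2/4)\leq\eta$, take logarithms, and substitute the chosen $\epsilon\sim\delta k/n$. This yields $c_1\geq C\delta^{-2}\bigl(d\log(n/\delta k)+\log(C_S c_0/\eta)\bigr)$, and tracking constants carefully through the continuity bound produces the stated $72$ and $36n/\delta k$. The main obstacle is precisely the continuity step: the hypothesis supplies concentration only for vectors in $S$, whereas the residual $x-x_0$ is neither on the unit sphere nor in $S$, so one must appeal to a deterministic Lipschitz control on $P$ (here, the operator-norm bound $\|P\|\leq 1$) to transfer isometry from the net to the full set without paying an additional probabilistic cost.
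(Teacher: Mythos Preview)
Your proposal is correct and follows essentially the same $\epsilon$-net/union-bound/Lipschitz-extension strategy as the paper. The only cosmetic difference is that the paper applies the triangle inequality to the non-squared norm $\sqrt{n/k}\,\|Px\|_2$ and converts to the squared isometry at the end (picking up the factor $3$ via $\delta=3\delta'$), whereas you expand the squared norm directly; both routes rely on the same deterministic bound $\|P\|\leq 1$ (which, as you correctly flag, is used in the intended application but is not actually part of the lemma's stated hypotheses).
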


\begin{proof}
Consider an $\epsilon$-cover $S_\epsilon$ of $S$ with $\epsilon = \delta'/(2\sqrt{n/k})$ and suppose that $P$ satisfies 
\begin{equation}
1-\delta'/2 \leq {\frac{n}{k}}\|Px\|_2^2 \leq 1+\delta'/2
\end{equation}
for all $x \in S_\epsilon$ with a constant $0<\delta'<1$. Then there exists a $u \in S_\epsilon$ such that:
\begin{align}
\sqrt{\frac{n}{k}} \| Px\|_2 &\leq \sqrt{\frac{n}{k}}\| Pu\|_2 +\sqrt{\frac{n}{k}}\| P(x-u)\|_2\\
& \leq 1+ \delta'/2 + \sqrt{\frac{n}{k}} \epsilon \label{eq: squared RIP implies a non-squared RIP}\\
&= 1+ \delta'
\end{align}
where in \eqref{eq: squared RIP implies a non-squared RIP} we have used the fact that $(1+\delta'/2)^2 > (1+\delta'/2)$.

We can similarly show that $\sqrt{\frac{n}{k}}\|Px\|_2 \geq 1-\delta'$. Then finally noting that the "non-squared" RIP implies the squared RIP in \eqref{eq: single point isometry} with $\delta = 3\delta'$ gives us the required isometry.

It only remains to bound the probability of failure. Let $p_f$ be the probability that $P$ fails to satisfy \eqref{eq: single point isometry} on $S$. By the union bound:
\begin{align}
pf &\leq |S_\epsilon| c_0 e^{-c_1 (\delta'/2)^2}\\
&\leq C_S c_0 \left(\frac{\delta'}{2\sqrt{n/k}}\right)^{-d} e^{-c_1 (\delta'/2)^2}
\end{align} 
Therefore it is sufficient to choose $\eta$ so that:
\begin{equation}
\frac{\eta}{C_S c_0} \geq  \left(\frac{\delta}{6\sqrt{n/k}}\right)^{-d} e^{-c_1 (\delta/6)^2}
\end{equation}
Re-arranging the above gives:
\begin{equation}
c_1 \geq 72 \delta^{-2} \left( d \log (36n/\delta k) + \log C_S c_0/\eta \right)
\end{equation}
as required. \qquad \end{proof}

We are now ready to prove the main theorem.

\begin{proof}
[Proof of Theorem~\ref{th: randEPI embedding}]

First, note that $\R_+\B \subset \R \B$ which is an infinite union of subspace model, as is its $p$-product, $(\R\B)^p$ associated with a group of aliased voxels, $\Lambda_i$. To guarantee that $h_{\Lambda_i}$ possesses the necessary RIP on $(\R\B)^p-(\R\B)^p$ it is sufficient to consider the RIP on the normalized difference set $S$ given by:
\begin{equation}
S = \{x \in ((\R\B)^p-(\R\B)^p),\|x\|_2 = 1\},
\end{equation}
due to the linearity of $h$. 

By construction we have $\dim(S) = 2p d_\B-1$ and we can therefore apply Lemma~\ref{th: boxdim isometry} to $S$ together with Lemma~\ref{th: aliased voxel isometry}. This guarantees for all $X_{\Lambda_i,:} \in (\R\B)^p-(\R\B)^p$
that $h$ satisfied \eqref{eq: single vector isometry} with probability at least $1-\eta$ as long as:
\begin{equation}
\lambda^{-2} \geq (3p) \times 72 \delta^{-2} \left( (2p d_\B-1) \log (36p/\delta) + \log C_S c_0/\eta \right)
\end{equation}
To ensure this holds for all aliased voxel groups $\Lambda_i$, $i = 1,\ldots, M$ we can again apply the union bound and replace $\eta$ by $M \eta$. Noting that $p, \delta^{-1}, \eta^{-1} > 1$ we can collect together the constants and simplify to finally give:
\begin{equation}
\lambda^{-2} \geq C \delta^{-2}  p^2 d_\B  \log (N/ \delta \eta )
\end{equation}
for some constant $C$ independent of $p, N, d_\B, \delta$ and $\eta$ which gives the required conditions of the theorem.
\qquad \end{proof}




\bibliographystyle{siam}


%







\end{document}